\newtheorem{assumption}{Assumption}
\newtheorem{lemma}{Lemma}
\newtheorem{observation}{Observation}
\newtheorem{proposition}{Proposition}
\newtheorem{remark}{Remark}
\newenvironment{proof}[1][Proof]{\noindent\textbf{#1.} }{\ \rule{0.5em}{0.5em}}
\DeclareMathOperator*{\argmax}{argmax}
\title{\Large Heterogeneous Treatment Effects via Linear Dynamic\\Panel Data Models\footnote{We thank participants at the IAAE Conference in Turin, as well as seminar audiences at Georgetown University and Johns Hopkins University, for their helpful comments.}} 
\author{Philip Marx\thanks{Louisiana State
    University. Email: \href{mailto:pmarx@lsu.edu}{pmarx@lsu.edu}.}
\and  Elie Tamer\thanks{Harvard University. Email:
    \href{mailto:elietamer@fas.harvard.edu}{elietamer@fas.harvard.edu}.}
\and Xun Tang\thanks{Rice University.
    Email: \href{mailto:xt9@rice.edu}{xun.tang@rice.edu}.}
}
\date{\today}
\begin{document}

\maketitle

\begin{abstract}
We study the identification of heterogeneous, intertemporal treatment effects (TE) when potential outcomes depend on past treatments. First, applying a dynamic panel data model to observed outcomes, we show that an instrumental variable (IV) version of the estimand in \cite{arellanobond91} recovers a non-convex (negatively weighted) aggregate of TE plus non-vanishing trends. We then provide conditions on sequential exchangeability (SE) of treatment and on TE heterogeneity that reduce such an IV estimand to a convex (positively weighted) aggregate of TE. 
Second, even when SE is generically violated, such estimands identify causal parameters when potential outcomes are generated by dynamic panel data models with some homogeneity or mild selection assumptions. 
Finally, we motivate SE and compare it with parallel trends (PT) in various settings with experimental data (when treatments are sequentially randomized) and observational data (when treatments are dynamic, rational choices under learning).
\end{abstract}

\newpage

\section{Introduction}

Consider a panel data setting in which a researcher observes individual outcomes $Y_{it}$ and treatments $D_{it}$ over multiple periods $t=1,...,T$. 
At any given time $t$, \textit{potential} outcomes depend on a full sequence of \textit{potential} treatments in the past, i.e., $Y_{it}(d^{t})$ where $d^{t}\equiv (d_{s})_{s=1}^{t}$ with $d_{s}\in \{0,1\}$. 
The objective of causal inference in this context is to learn about certain moments of $Y_{it}(d^{t})$ using the joint distribution of $\{D_{it},Y_{it}\}_{t\leq T}$, where the observed outcomes are determined as $Y_{it}=Y_{it}(D_{i}^{t})$.

We consider the question of causal inference through the lens of linear dynamic panel data models that also include a lagged outcome as a regressor, e.g.,
\begin{equation}
\label{eq:DPDM}
    Y_{it} = \beta D_{it} + \gamma Y_{it-1} + \theta_t + \alpha_i  + \varepsilon_{it}.
\end{equation}
There are at least two reasons one might argue for the necessity of such a \textit{dynamic} panel data model (DPDM). 
First, treatments may only be randomly assigned conditional on the past observed history. 
For example, when estimating the effect of a training program on earnings or employment outcomes, enrollment in the program may depend on previous earnings or employment \citep{ashenfelter1978}. In such cases, the lagged outcome is a necessary control for the exogeneity of the treatment $D_{it}$. 
Even though \eqref{eq:DPDM} appears to deviate from a static panel data model only by an additional control, the fact that this additional control is a lagged dependent variable raises concerns about the strict exogeneity condition on $(D_{it},Y_{it-1})$ that is necessary for two-way fixed-effect (TWFE) regressions.  
Second, outcomes may exhibit state-dependence.
For example, past employment may have a causal effect on present employment \citep{heckman1981}.
In such cases, the lagged outcome also has a causal effect and may furthermore mediate the effects of past treatment. A DPDM such as \eqref{eq:DPDM} allows researchers to disentangle the partial effects of $D_{it}$ from the state dependence in $Y_{it-1}$. 

Section 2 introduces the potential outcomes framework underlying our analysis. 
We then formally state the {\it sequential exchangeability} assumption, an identifying condition extensively developed in the biostatistics literature \citep{robins1986g}, and we relate it to the {\it parallel trends} assumption that has been used in the econometrics literature.
The sequential exchageability assumption requires that  present potential outcomes are (mean) independent from contemporary treatment, once conditioned on a full history of \textit{observed} outcomes and treatments.  

Section \ref{sec:DPR} investigates whether commonly used dynamic panel data estimators (DPDEs) for {\it observed} outcomes — most notably the first-differenced IV (and more generally, GMM) estimators of \cite{andersonhsiao82} and \cite{arellanobond91} — can be mapped to causal estimands of interest.
These estimators use lagged regressors as instruments to consistently estimate $\beta ,\gamma $ in the first-differenced transformation of \eqref{eq:DPDM}:
\begin{equation}
\label{eq:DPDM-fd}
    \Delta Y_{it} = \beta \Delta D_{it} + \gamma \Delta Y_{it-1} + \Delta \theta_t  + \Delta \varepsilon_{it},
\end{equation}
where $\Delta$ denotes the first-difference operator.
To answer this question, we begin by deriving a general causal decomposition of the estimands of these DPDEs (Lemma \ref{thm:iv-fwl}).
Implicit in this decomposition, and more generally in our goal of causal inference, is the possibility that \eqref{eq:DPDM} and \eqref{eq:DPDM-fd} are \textit{not} structural models of the outcome-generating process, for example, because of possible heterogeneity in treatment effects (TEs).
Our analysis reveals a link between the instrumented first-difference estimators in a reduced-form model of \emph{observed} outcomes (\ref{eq:DPDM}) and the heterogeneous TEs on \emph{potential} outcomes.

This exercise  bears resemblance to, but differs qualitatively from, the work of \cite{DS2020} and \cite{goodman2021difference} in decomposing and interpreting TWFE regressions. 
In a non-dynamic setting where potential outcomes are indexed only by contemporary treatment $Y_{it}(d_{t})$, they showed the TWFE estimand in static panel data models, which is a special case of (\ref{eq:DPDM}) with $\gamma =0$, is a (possibly negatively) weighted average of heterogeneous TE, $Y_{it}(1)-Y_{it}(0)$.%
\footnote{
    The use of TWFE in static panel data models requires that $D_{it}$ be \textit{strict }exogenous in order to identify $\beta $ and $\gamma$. 
    However, the difference-in-differences approach in \cite{DS2020} does not impose that the true data-generating process satisfies the parametric, reduced-form $Y_{it}=\beta D_{it}+\theta_{t}+\alpha _{i}+\varepsilon _{it}$. 
    Their focus is instead on relating the TWFE estimand to aggregated heterogeneous treatment effects under parallel trends; 
    they also consider first-difference regressions under the same assumptions.
} 
This literature has also considered the role and the complications of covariates \citep{heckmanichimuratodd1997xdid, abadie2005xdid, caetanocallaway2023xdid}, as well as proposing alternative estimators under parallel trends (PT) and other assumptions on the outcome or treatment processes, like staggered designs (e.g.,\cite{DS2020, callawaysatanna, sundid}).

In comparison with these earlier works, we face different challenges, because the treatment effects in our setting vary with the full sequence of past potential treatments, and the estimator in question uses instruments designed to address the failure of strict exogeneity of observed $D_{it}$ and $Y_{it-1}$ in the DPDM \eqref{eq:DPDM}.
Among existing works that build on variants of parallel trends, the setting of \cite{chaisemartinhaultfoeuille2024dyn} is most similar in that it also allows for intertemporal effects on potential outcomes in non-staggered designs; in contrast, however, our approach is based on sets of new assumptions different than parallel trends. 

Building on our decomposition in Lemma \ref{thm:iv-fwl}, we derive sufficient conditions under which an estimand using lagged variables as instruments in the first-differenced equation, as in \cite{andersonhsiao82} and \cite{arellanobond91}, is guaranteed to be a positively weighted average of heterogeneous TE (Proposition \ref{thm:2sls-te}). 
These conditions include the sequential exchangeability assumption of \cite{robins1986g} as well as restrictions on the heterogeneity of TE.
 
By invoking a condition of sequential exchangeability, our approach is related to, and complements, a broad suite of treatment effect estimators collectively referred to as $g$-methods that have been developed for causal inference in this setting [e.g., see \cite{robins1986g, robins1997msm} and \cite{robinshernan2009chapter} for a survey].
Most closely related, \cite{robins1997msm} proposes an IPW estimator that identifies $E[Y_{it}(d^{t})]$ using the observed outcomes $Y_{it}$ in stratum $\{D_{i}^t = d^t\}$; these potential outcomes can in turn be used to identify history-dependent treatment effects such as $E [ Y_{it} (d^{t-1},1 ) - Y_{it} (d^{t-1}, 0)]$. 
We connect our approach back to this existing estimator through an \textit{outcome-modifying} first-differenced IV estimator (Proposition \ref{thm:iv-te-modified}), which recovers the average treatment effect $E [ Y_{it} (D^{t-1},1 ) - Y_{it} (D^{t-1}, 0)]$ while dispensing with substantive restrictions beyond sequential exchangeability in g-methods.

The sequential exchangeability condition has played a central role in our results above (i.e., causal interpretation using DPDEs, and the outcome-modifying IV estimator).
This naturally raises two follow-up questions: (a) Can causal inference be conducted using DPDEs \emph{without} sequential exchangeability? (b) Can sequential exchangeability be justified in observational data where treatments are endogenously determined by forward-looking individuals?
We provide conditions under which the answers to both questions are indeed \emph{positive}. 

To answer (a), Propositions \ref{thm:seq-exog-implies-seq-exchange-fe}, \ref{thm:suff-cond-for-AR1991}, and \ref{thm:IV-arupo} delineate cases where sequential exchangeability is generically violated, and yet existing or modified DPDEs identify well-defined causal estimands. 
Specifically, we first focus on a parametric case where the data-generating process of the latent \textit{potential} outcomes $Y_{it}(d^{t})$ follows a linear DPDM:%
\begin{equation*} 
Y_{it}(d^{t})=\beta ^{\ast }d_{t}+\gamma ^{\ast }Y_{it-1}(d^{t-1})+\theta
_{t}^{\ast }+\alpha _{i}^{\ast }+\varepsilon _{it}^{\ast }(d_{t})\text{.}
\end{equation*}
We show that in this case, the observed outcomes $Y_{it}=Y_{it}(D_{i}^{t})$ conform with a reduced-form DPDM in (\ref{eq:DPDM}) in the following sense: 
(i) the coefficients in the DPDM coincide with constant causal effects, i.e., $\beta =\beta^{\ast },\gamma =\gamma ^{\ast }$, and (ii) an assumption of {\it conditional} sequential exogeneity with respect to the \textit{structural} errors $\varepsilon _{it}^{\ast }(\cdot )$ ensures the treatments $D_{it}$ are sequentially orthogonal to the implied \textit{reduced-form} errors $\varepsilon_{it}\equiv \varepsilon^*_{it}(D_i^t)$. 
Together, (i) and (ii) guarantee the existing DPDEs consistently estimate the constant causal effects, even when sequential exchangeability in \cite{robins1986g} does not hold.%
\footnote{
    The conditional sequential exogeneity of contemporaneous treatments and past outcomes in (ii)  only holds after controlling for {\it unobserved} fixed effects. Hence, it allows for endogenous selection on such unobservables and violates the sequential exchangeability in \cite{robins1986g}.}

We extend the results for a more general model where only the \textit{untreated} potential outcomes $Y_{it}(d^{t-1},0)$ take an auto-regressive form above
while the intertemporal TEs \(\tau_{it}(d^{t-1}) \equiv Y_{it}(d^{t-1},1)-Y_{it}(d^{t-1},0)\) are time-varying and heterogeneous.
We show that in this case, an IV regression conditional on observed history identifies a conditional average treatment effect on the treated, as well as time trends and the state dependence parameter $\gamma^*$.
This result requires two mild assumptions: conditional serial uncorrelation of the structural errors of potential outcomes $\varepsilon^*_t(d^{t-1})$ (Assumption \ref{assn:arupo}), and conditional mean independence of heterogeneous TE from past errors $\varepsilon_{is}^*(\cdot)\text{ for } s\leq t-1$ (Assumption \ref{assn:MI-hte}).
The latter is a weak restriction on the heterogeneous TEs in that it does not impose any ``conditional unconfoundedness'' condition. That is, it allows for conditional correlation between TEs and contemporary treatment choices.

To investigate question (b), Section \ref{sec:evaluating} provides, in the context of both experimental and observational data, conditions under which sequential exchangeability holds whereas a common alternative of parallel trends does not.
For an experimental setting, we show sequentially randomized treatments are compatible with sequential exchangeability, but not compatible with parallel trends in general.

We also provide critical guidance for empiricists to assess sequential exchangeability in observational data, especially when treatment choices are most likely dependent upon past outcomes. 
For example, when estimating the effect of a training program on earnings or employment outcomes, enrollment into the training program may depend on previous earnings and employment, as per \cite{ashenfelter1978}.

To this end, we consider sequential exchangeability in a model of dynamic choices that serves as a structural motivation. 
A particular focus of the model is on the possibility of learning, which is compatible with the nature of sequential exchangeability assumptions that condition (like a learning decision-maker) on the history of observed outcomes.
The structural model provides a coherent dynamic decision problem of an economic agent where preferences, beliefs and decision rules are specified and shows the kinds of restrictions that are required, especially on the distribution of prior beliefs, to yield the sequential exchangeability conditions we use. 
Thus, our work connects to recent work on identification of learning models in nonparametric settings (e.g., \citep{bunting2022learning}).   This exercise is similar to the one done in \cite{marx2024parallel} for a model under parallel trends.

Our work also relates to a broader agenda relating sequential exchangeability and parallel trends (e.g., \cite{renson2023pt}) and the corresponding estimators, for example through ``bracketing'' relationships on the target parameter \citep{angrist2008mostly, ding2019bracketing}.
Additionally, \cite{han2021dynamic} studies the identification of dynamic treatment effects beyond sequential exchangeability, namely under a dynamic version of rank similarity. 
In complementary work, \cite{kim2023dynamic} studies identification of a dynamic panel data model when treatment is staggered and potential outcomes are non-negative, statically indexed, and take a multiplicative form; in this case, he obtains identification results for ratios of treatment effects under sequential exchangeability conditional on persistent heterogeneity. Finally, \cite{klosin} provides an interesting approach that corrects for the bias caused by ignoring dynamic feedback in static panel data regressions. 

The rest of the paper is organized as follows. Section \ref{sec:framework} introduces the dynamic potential outcome framework. Section \ref{sec:DPR} studies how to use DPDEs to make causal inference about intertemporal heterogeneous TEs in this framework, with and without the sequential exchangeability condition.  
Section \ref{sec:evaluating} investigates when sequential exchangeability can be justified in an (observational) context of dynamic treatment choices with learning, and 
an (experimental) setting of sequential randomization of treatments.
Section \ref{sec:conclusion} concludes. Proofs are collected in Appendix \ref{apx:proofs}.

\ \ \ 

\section{A Dynamic Framework for Potential Outcomes}
\label{sec:framework}
The researcher observes treatment $D_{it}$ with realized outcomes $Y_{it}$ for individual units $i$ in time periods $t=0,1,\dots,T$. 
There is a pre-treatment period $t=0$ with an initial realized outcome $Y_{i0}$ and $\Pr\{D_{i0} = 0\}=1$.
We do not model how $Y_{i0}$ is determined prior to the sampling period, but allow it to be correlated with potential outcomes and treatment choices in our analysis.
Potential outcomes in period $t\geq1$ are functions only of an individual's own treatment history through period $t$. 
\bigskip 

\begin{assumption}
\label{assn:po-treatments}
Potential outcomes for individual unit $i$ in period $t$ are indexed only by $i$'s own treatment through that time period, i.e., $Y_{it} (d_{i}^t)$, with $d^{t}_i\equiv (d_{is})_{s=1}^{t}$.
\end{assumption}
\bigskip

Assumption \ref{assn:po-treatments} rules out spillovers across individual units as well as anticipation of future treatments affecting present potential outcomes. 
Observed outcomes equal potential outcomes evaluated at observed treatments, $Y_{it} = Y_{it} (D_i^t)$.
In this framework, the treatment effects $Y_{it}(d^t) - Y_{it}(\widetilde d^t)$ are \textit{intertemporal} in that they are defined by the full history of past potential treatments.

We adopt notation $Y_i^t \equiv (Y_{i0}, Y_{i1}, \dots, Y_{it})$ for the history of realized outcomes, $Y_i^t (d^t) \equiv (Y_{i0}, Y_{i1} (d^1), \dots, Y_{it} (d^t))$ for that of potential random outcomes, and $Y_{i}^t (\cdot)\equiv \{Y_i^t(d^t):d^t\in\{0,1\}^t\}$, i.e., the collection of potential random vectors through period $t$.  
Throughout, we restrict to realized and potential treatment vectors $D_i^t$ and $d_i^t$ with degenerate initial realization $D_{i0} = d_{i0} = 0$, which we suppress where convenient.

The main concern that motivates the inclusion of lagged outcomes as controls in a dynamic panel data model (DPDM) for \textit{observed} outcomes in \eqref{eq:DPDM} is that the potential outcomes are (mean) independent from the treatments only after conditioning on the \emph{observed} history of past treatments and outcomes. 
The assumption below formalizes such conditional independence.
For subsequent references, we include both a full and mean version of the assumption.
We drop individual subscripts $i$ to simplify notation.\bigskip

\begin{assumption}[Sequential Exchangeability]
\label{assn:mse}
Let $1 \leq s \leq t \leq T$, and let $d^{s}$ and $d^t$ denote subvectors of the first $s$ and $t$ terms in $d^T\equiv (d_1,\dots, d_{s}, \dots, d_t, \dots, d_T)$ respectively. 
\begin{enumerate}[a.]
    \item Under (full) sequential exchangeability,
    \[
        (Y_s (d^s), \dots, Y_T (d^T)) \perp D_s | Y^{s-1}, D^{s-1} = d^{s-1}.
    \]

    \item  Under (mean) sequential exchangeability,
    \[
        E [ Y_t (d^t) | Y^{s-1}, D^{s-1} = d^{s-1}, D_s = d_s'] \quad \text{is invariant in $d_s'$.}
    \]
\end{enumerate}
\end{assumption}   

\noindent
Assumption \ref{assn:mse} in either form is based on the sequential exchangeability condition in \cite{robins1986g} when the covariate history consists only of lagged outcomes.\footnote{
    Other names for Assumption 2b in the literature include ``conditional exchangeability'', ``sequential ignorability'', and ``conditional parallel trends''.}

Assumption \ref{assn:mse} only restricts a stream of potential outcomes where the first $s-1$ treatment indices conform to the realized history of treatments $D^{t-1}=d^{t-1}$; it imposes conditional independence between present treatment and the entire stream of counterfactual potential outcomes in the future. 
Mean sequential exchangeability (Assumption \ref{assn:mse}b) relaxes full sequential exchangeability (Assumption \ref{assn:mse}a) in two ways. 
First, the \textit{joint} independence of treatment with the vector of present and future potential outcomes is replaced with \textit{marginal} independence of treatment with potential outcomes in each future period. 
Second, such marginal independence is further relaxed to marginal \textit{mean} independence. 
Nonetheless, each version is consistent with the overarching motivation of random assignment of treatment conditional on the {\it observed} history.

It is well-known that sequential exchangeability is sufficient for identifying a variety of treatment effects (e.g, \cite{robinshernan2009chapter}). 
Instead, our focus will be to show how this condition also leads to causal interpretations of \textit{alternative} estimands in the dynamic panel data models for observed outcomes.

\begin{remark}
A useful observation is that Assumption \ref{assn:mse}b, which is stated above in terms of the \textit{levels} of potential outcomes $Y_t (d^t)$, is equivalent to an analogous assumption on the \textit{trends} in potential outcomes $Y_t (d^t) - Y_{t-1} (d^{t-1})$. This provides insights on comparisons between Assumption \ref{assn:mse}b and the parallel trends assumption used in the difference-in-differences (DiD) literature. This equivalence result is stated in the next proposition.
\end{remark}

\medskip

\begin{proposition}
\label{thm:mse-trend-equivalence}
    Assumption \ref{assn:mse}b is equivalent to: for any $t\geq 1$ and $d^t \in \{0,1\}^t$,
    \[
    E [ Y_t (d^t) - Y_{t-1} (d^{t-1}) | Y^{s-1}, D_s = d_s', D^{s-1} = d^{s-1}]
    \quad \text{is invariant in $d_s'$ for $t \geq s \geq 1$,}
    \]
    where $d^{s}$ is a vector consisting of the first $s$ components in $d^t$.
\end{proposition}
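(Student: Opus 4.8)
The plan is to prove both implications through a single device: a telescoping decomposition of the level $Y_t(d^t)$ into its period-by-period increments, together with the consistency relation that collapses the ``base period'' potential outcome onto an observed, conditioned-upon outcome. Before either direction, I would record the key consistency fact. On the event $\{D^{s-1} = d^{s-1}\}$, Assumption \ref{assn:po-treatments} gives $Y_{s-1}(d^{s-1}) = Y_{s-1}$, and $Y_{s-1}$ is one of the coordinates of the conditioning vector $Y^{s-1}$. Hence
\[
E\left[ Y_{s-1}(d^{s-1}) \mid Y^{s-1}, D^{s-1} = d^{s-1}, D_s = d_s' \right] = Y_{s-1},
\]
which is trivially invariant in $d_s'$. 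The same observation applied one period higher shows that in the boundary case $s = t$ the term $E[Y_{t-1}(d^{t-1}) \mid Y^{t-1}, D^{t-1} = d^{t-1}, D_t = d_t'] = Y_{t-1}$ does not depend on $d_t'$ either.

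For the forward direction (Assumption \ref{assn:mse}b $\Rightarrow$ trends), I would fix $t$, $d^t$, and $s \le t$ and split the conditional expectation of the trend by linearity into $E[Y_t(d^t)\mid\cdot\,] - E[Y_{t-1}(d^{t-1})\mid\cdot\,]$, where the conditioning set is $(Y^{s-1}, D^{s-1} = d^{s-1}, D_s = d_s')$ throughout. The first term is invariant in $d_s'$ directly by Assumption \ref{assn:mse}b, since $s \le t$. For the second term I would distinguish two cases: when $s \le t-1$, invariance again follows from Assumption \ref{assn:mse}b applied with top index $t-1$ and vector $d^{t-1}$; when $s = t$, invariance is exactly the boundary consistency fact above. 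Subtracting two quantities that are each invariant in $d_s'$ yields the trend statement.

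For the converse (trends $\Rightarrow$ Assumption \ref{assn:mse}b), I would telescope
\[
Y_t(d^t) = Y_{s-1}(d^{s-1}) + \sum_{\tau=s}^{t} \left[ Y_\tau(d^\tau) - Y_{\tau-1}(d^{\tau-1}) \right],
\]
take the conditional expectation given $(Y^{s-1}, D^{s-1} = d^{s-1}, D_s = d_s')$, and apply linearity. Each increment $E[Y_\tau(d^\tau) - Y_{\tau-1}(d^{\tau-1}) \mid \cdot\,]$ is invariant in $d_s'$ by the trend hypothesis, since $\tau$ ranges over $\tau \ge s$; and the boundary term equals $Y_{s-1}$ by the consistency fact, which is invariant. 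Hence the whole sum, namely $E[Y_t(d^t)\mid\cdot\,]$, is invariant in $d_s'$, which is precisely Assumption \ref{assn:mse}b.

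I expect the only delicate point to be the boundary bookkeeping rather than any analytic difficulty: one must verify that the potential outcome at the anchoring period ($Y_{s-1}(d^{s-1})$ in the telescope, and $Y_{t-1}(d^{t-1})$ in the $s=t$ case of the forward direction) genuinely collapses to a conditioned-upon observed outcome by consistency, so that it contributes an invariant term rather than an extra counterfactual that would need to be controlled. Everything else reduces to linearity of conditional expectation and to checking that the index ranges in Assumption \ref{assn:mse}b line up with those in the trend statement.
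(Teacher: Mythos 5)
Your proposal is correct and takes essentially the same approach as the paper: both directions hinge on linearity of conditional expectation plus the consistency fact that $Y_{s-1}(d^{s-1})$ collapses to the conditioned-upon $Y_{s-1}$ on the event $\{D^{s-1}=d^{s-1}\}$. The only cosmetic difference is that your converse uses an explicit telescoping sum where the paper runs an induction on $t$ anchored at the base case $t=s$ --- these are the same argument unrolled versus rolled up.
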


It is straightforward to illustrate Proposition \ref{thm:mse-trend-equivalence}  when $T=2$, in which case the alternative trend formulation is: for any $(d_1,d_2)\in\{0,1\}^2$, 
\begin{align*}
    E [ Y_1 (d_1) - Y_0 | Y_0, D_1] &= E [ Y_1 (d_1) - Y_0 | Y_0]; \\
    E [ Y_2 (d_1, d_2) - Y_1 (d_1) | Y_0, D_1] &= E [ Y_2 (d_1, d_2) - Y_1 (d_1) | Y_0] ; \\
    E [ Y_2 (d_1, d_2) - Y_1 (d_1) | Y_0, Y_1, D_1 = d_1, D_2] &= 
    E [ Y_2 (d_1, d_2) - Y_1 (d_1) | Y_0, Y_1, D_1 = d_1].
\end{align*}
It is clear from $T=2$ that in one way, sequential exchangeability is strong in that it is required to hold for all potential outcomes, both treated and untreated, whereas the parallel trends assumption only restricts the trends in untreated outcomes. 
On the other hand, sequential exchangeability is required to hold conditional on observed history --- including past outcomes --- while parallel trends do not condition on past outcomes. 

\ \ 

\section{Causal Inference via Dynamic Panel Data Methods}
\label{sec:DPR}

\subsection{DPDM for Observed Outcomes} \label{sec:dpdm}

We consider causal interpretations of an estimand motivated by an instrument-based method for estimating a dynamic panel data model (DPDM) for observed outcomes:
\begin{equation}
\label{eq:dyn-panel}
    Y_{it} = \beta D_{it} + \gamma Y_{it-1} + \theta_t + \alpha_i  + \varepsilon_{it},
\end{equation}
where $\gamma, \beta$ are constant parameters, $\alpha_i$ are time-invariant individual fixed effects, $\theta_t$ are deterministic time trends, and $\varepsilon_{it}$ are time-varying idiosyncratic errors. 
We investigate how the estimated coefficients can be related to the moments of heterogeneous treatment effects and trends in potential outcomes.

Equation \eqref{eq:dyn-panel} is natural for empiricists as a reduced-form model for a series of \textit{observed} outcomes.  
For example, a natural interpretation is an extension of a static panel data model in which the researcher also wishes to control for lagged outcomes. 
A challenge for estimating \eqref{eq:dyn-panel} is the endogeneity of lagged outcomes on the right-hand side, due to the unobservable individual fixed effects $\alpha_i$.
The ordinary least-squares (OLS) estimators and fixed-effect estimators using either ``within'' transformation or first-differences are biased and inconsistent when applied to \eqref{eq:dyn-panel} (e.g., \citep{baltagi2021dynamic}).

\cite{andersonhsiao82} and \cite{arellanobond91} proposed consistent estimators for the coefficients in \eqref{eq:dyn-panel}, using moments implied by instruments such as the lagged outcomes and covariates. 
Specifically, first-differencing \eqref{eq:dyn-panel} implies:
\begin{equation}
\label{eq:fd}
    \Delta Y_{it}=
    \beta \Delta D_{it} + \gamma \Delta Y_{it-1} + \Delta \theta_{t} +
    \Delta \varepsilon_{it}\ \quad \text{for } 2 \leq t \leq T \text{,}
\end{equation}%
where $\Delta$ denotes the first-difference operator, e.g., $\Delta Y_{it}\equiv Y_{it}-Y_{it-1}$.
\cite{andersonhsiao82} estimated the coefficients via IV regression, using $Y_{it-2}$ as instruments for $\Delta Y_{it-1}$ when $D_{it}$ is \emph{strictly} exogenous. 
\cite{arellanobond91} proposed a GMM estimator when $D_{it}$ is \emph{sequentially} exogenous (a.k.a., \emph{pre-determined}), using earlier lagged outcomes \emph{and} treatments as instruments for $\Delta Y_{it-1}$ and $\Delta D_{it}$.

It is important to emphasize that we do not interpret the DPDM for observed outcomes \eqref{eq:dyn-panel} as a causal model per se. 
{Rather, our goal is to obtain a causal interpretation of the limits to which a first-differenced instrument-based estimator converge. We refer to these limits as the Arellano-Bond {\it (first-differenced) instrumental-variable (IV) estimands} of DPDMs (or DPDEs).
We show how these estimands relate to the distribution of heterogeneous treatment effects under non-parametric assumptions on the {\it potential} outcome process.}
This exercise is analogous to the interpretation of TWFE regressions under parallel trend assumptions, e.g., \cite{DS2020} and \cite{goodman2021difference}. 

\subsection{Causal Interpretation of First-Differenced IV Estimands} 
\label{sec:causal-iv-se}

We focus henceforth on the causal interpretation of IV estimands of DPDM in (\ref{eq:dyn-panel}) when $T=2$, which are defined as follows 
(we drop individual subscripts $i$ for simplicity): 
\begin{equation}
\label{eq:fd-iv}
    \begin{pmatrix}
        \widetilde \beta \\ \widetilde \gamma \\ \widetilde{\Delta \theta_2}
    \end{pmatrix}
    \equiv 
    E [ Z' X]^{-1} E [ Z' \Delta Y_2],
\end{equation}
where
\begin{align*}
    X & \equiv (\Delta D_2, \Delta Y_1, 1) \label{eq:X} \\
    Z & \equiv E [ X | Y_0, D_1]
\end{align*} 
are each row vectors. 

If equation (\ref{eq:dyn-panel}) is the  actual data-generating process (DGP) for the {\it observed} outcomes, $\widetilde \beta$ is the probability limit of an IV estimator for $\beta$ with $(E[\Delta D_2 | Y_0, D_1], E[\Delta Y_1 | Y_0, D_1])$ instrumenting for $(\Delta D_{2}, \Delta Y_{1})$.\footnote{
    This IV estimator would also be numerically equivalent to a 2SLS estimator that uses $(D_1,Y_0)$ as instruments for $(\Delta D_2,\Delta Y_1)$, 
    if the conditional expectations of $\Delta D_2,\Delta Y_1$ are linear in $(Y_0, D_1)$.}
With such a choice of instruments, the model is just-identified, with GMM/2SLS estimators numerically identical to the IV estimator. In addition, if  the  DGP further satisfies the identifying assumptions in \cite{arellanobond91}, such as exogeneity of $D_{it}$ and serial uncorrelation of reduced-form errors $\varepsilon_{it}$, then $\widetilde \beta$ would indeed coincide with $\beta$ in (\ref{eq:dyn-panel}).

A natural question to ask is the following: are there any DGPs of {\it potential} outcomes and treatments that would lead to a model for {\it observed} outcomes that conform to the functional form in (\ref{eq:dyn-panel}) \emph{and} satisfy the identifying conditions in \cite{arellanobond91} (so that $\widetilde \beta = \beta$)? 
The answer is positive; we present it in Section \ref{sec:arpo}.

For the rest of Section \ref{sec:causal-iv-se}, we address a different question, namely, whether the IV estimand $\widetilde{\beta}$ above admits a causal interpretation in general. We explore conditions under which it does. 
While doing so, we maintain that the DGP of observed outcomes does not necessarily conform to (\ref{eq:dyn-panel}) or the identifying conditions in \cite{arellanobond91}.

Define history-dependent trends in potential outcomes:
\begin{equation}
\label{eq:potential-trend}
    \delta_{1}\equiv Y_{1}(0)-Y_{0} ; \quad 
    \delta _{2}(d_{1}) \equiv Y_{2}(d_{1},0)-Y_{1}(d_{1})
\end{equation}
and the treatment effects:
\begin{equation*}
    \tau_{1}\equiv Y_{1}(1)-Y_{1}(0)  
    ; \quad
    \tau _{2}(d_{1}) \equiv Y_{2}(d_{1},1)-Y_{2}(d_{1},0). 
    \label{causal_trends} 
\end{equation*} 
The trend of observed outcomes are decomposed as:%
\begin{align}
\Delta Y_{1} 
&= Y_{1}(D_{1})-Y_{0}=\delta _{1}+D_{1}\tau _{1}\text{;}  
\label{eq:Y1-decompose} \\
\Delta Y_{2} 
&= Y_{2}(D_{1},D_{2})-Y_{1}(D_{1})=\delta _{2}(D_{1})+D_{2}\tau
_{2}(D_{1})\text{.} 
\label{eq:Y2-decompose}
\end{align}

The next lemma invokes the Frisch-Waugh-Lovell Theorem and the decomposition of $\Delta Y_2$ in \eqref{eq:Y2-decompose} to express the IV estimand $\widetilde \beta$ in \eqref{eq:fd-iv} in terms of causal and non-causal components. 
Let $w (Y_0, D_1)$ be the errors in the linear projection of $Z_1 \equiv E(\Delta D_{2}|Y_{0},D_{1})$ onto $Z_{-1} \equiv (E(\Delta Y_{1}|Y_{0},D_{1}),1)$:
\begin{equation}
\label{eq:weights}
w (Y_0,D_1) \equiv Z_1 - Z_{-1} E [ Z_{-1}' Z_{-1}]^{-1} E [ Z_{-1}' Z_1].
\end{equation}
Assume $w(Y_0,D_1)$ is not degenerate at zero, which is an innocuous, empirically verifiable condition. \bigskip

\begin{lemma}
\label{thm:iv-fwl}
    Suppose Assumption \ref{assn:po-treatments} holds, and $w(Y_0,D_1)$ as defined in (\ref{eq:weights}) is not degenerate at zero. 
    The IV estimand $\widetilde \beta$ in \eqref{eq:fd-iv} is:
    \begin{equation} \label{eq:iv-fwl}
        \frac{ E[ w(Y_0, D_1) \Delta Y_2 ] }{ E [ w(Y_0, D_1)^2]} = 
        E \left[ \frac{w (Y_0, D_1) D_2 }{E[w(Y_{0},D_{1})^{2}]} \times \tau_2 (D_1) \right] + E \left[ \frac{w (Y_0, D_1)}{E[w(Y_{0},D_{1})^{2}]} \times \delta_2 (D_1) \right].    
    \end{equation}
\end{lemma}

\bigskip 

Lemma \ref{thm:iv-fwl} clarifies the basis and impediments to causal interpretations of the IV estimand $\widetilde\beta$ with no other conditions than Assumption \ref{assn:po-treatments}. 
It does \textit{not} invoke sequential exchangeability in Assumption \ref{assn:mse}.
The first term on the right in (\ref{eq:iv-fwl}) is a weighted aggregate of individually heterogeneous treatment effects $ \tau_2(\cdot) $ in the second period among the treated ($D_2 = 1$), yet with no guarantee that the combination is convex or even positively weighted.%
\footnote{
    This is analogous to a point by \cite{DS2020} and \cite{goodman2021difference} in the causal interpretation of two-way fixed-effect (TWFE) regressions in static panel data models.
}
The second term, if non-zero, is an additional confounder to causal interpretation in terms of treatment effects $\tau_2 (\cdot)$.

To endow this decomposition of the IV estimand $\widetilde\beta$ in \eqref{eq:iv-fwl} with a clearer causal interpretation, we require further restrictions on heterogeneous treatment effects in addition to those on selection in Assumption \ref{assn:mse}. \bigskip

\begin{assumption}[Limited Heterogeneity of Trends and Treatment Effects]
\label{assn:limited-heterogeneity}
For all $d_1, d_1' = 0,1$:
\begin{align}
    E [\delta _{1}|Y_{0}]
    &=
    E [\delta _{1}],
    \label{eq:invar_trend_1} \\
    E[\delta _{2}(d_{1})| Y_{0}]
    &=
    E[\delta _{2}(d_{1})],
    \label{eq:invar_trend_2} \\
    E [\tau _{1}|Y_{0}]
    &= 
    E [\tau_{1}] \neq 0, 
    \label{eq:invar_te_1} \\
    E[\tau _{2}(d_{1})|Y_{0}, Y_{1} (d_1')]
    &=
    E[\tau _{2}(d_{1})].  
    \label{eq:invar_te_2}
    \end{align}%
\end{assumption}
\bigskip 

Assumption \ref{assn:limited-heterogeneity} requires the first moments of treatment effects $\tau_t (\cdot)$ and untreated trends $\delta_t (\cdot)$ to be invariant in the initial condition or earlier potential outcomes; obviously, it holds when these treatment effects and untreated trends are homogeneous in the population.

The next proposition shows that under Assumption \ref{assn:limited-heterogeneity} the IV first-difference estimand recovers a \textit{convex} combination of history-dependent treatment effects.\bigskip

\begin{proposition}
\label{thm:2sls-te}
    Suppose Assumptions \ref{assn:po-treatments}, \ref{assn:mse}a, \ref{assn:limited-heterogeneity} hold, and $w(Y_0,D_1)$ is not degenerate at zero. 
    Then the IV estimand $\widetilde\beta$ in \eqref{eq:fd-iv} is a convex combination of treatment effects:
    \begin{equation}
    \label{eq:te-convex}
        E \left[ 
            \frac{w(Y_0,D_1)^2}{E[w(Y_0,D_1)^2]} \times \tau_2 (D_1)
        \right],
    \end{equation}
    where the projection errors $w(Y_0,D_1)$ simplify to:
    \begin{equation}
    \label{eq:weights-convex}
        w(Y_0, D_1) = E [ D_2 | Y_0, D_1] - E [ D_2 | D_1].
    \end{equation}
\end{proposition}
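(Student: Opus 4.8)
The plan is to start from the causal decomposition of the probability limit already established in Lemma~\ref{thm:2sls-fwl}, namely equation~\eqref{eq:2sls-fwl}, and to show that (i) the projection error $w(Y_0,D_1)$ collapses to the simple form in \eqref{eq:weights-convex}, (ii) the trend (confounding) term involving $\delta_2$ vanishes, and (iii) the treatment-effect term reduces from a $D_2$-weighted to a $w$-weighted aggregate, yielding the convex weights $w^2/E[w^2]$ in \eqref{eq:te-convex}. Throughout, the engine is the interplay between sequential exchangeability (Assumption~\ref{assn:mse}a), which lets me strip realized treatments out of conditional expectations of potential-outcome functionals, and the limited-heterogeneity restrictions of Assumption~\ref{assn:limited-heterogeneity}, which then replace those conditional expectations by unconditional means.

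First I would simplify $w$. Applying Assumption~\ref{assn:mse}a at $s=1$ (so that $Y_1(d_1)$ and $Y_2(d_1,d_2)$, hence $\delta_1$ and $\tau_1$, are independent of $D_1$ given $Y_0$) together with \eqref{eq:invar_trend_1} and \eqref{eq:invar_te_1}, the decomposition \eqref{eq:Y1-decompose} gives $E[\Delta Y_1\mid Y_0,D_1]=E[\delta_1]+D_1\,E[\tau_1]$. Because $E[\tau_1]\neq 0$, this conditional mean is affine in $D_1$ with nonzero slope, so the linear span of $\{1,E[\Delta Y_1\mid Y_0,D_1]\}$ equals that of $\{1,D_1\}$. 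Projecting $E[\Delta D_2\mid Y_0,D_1]$ (with an intercept, matching the constant $\Delta\theta_2$) onto this span therefore coincides with projecting onto $\{1,D_1\}$; since $D_1$ is binary, that projection is $E[\Delta D_2\mid D_1]$, and cancelling the common $-D_1$ gives $w=E[D_2\mid Y_0,D_1]-E[D_2\mid D_1]$, i.e.\ \eqref{eq:weights-convex}. The key by-product is the orthogonality $E[w\mid D_1]=0$, whence $E[w\,g(D_1)]=0$ for every function $g$ of $D_1$; this identity does the heavy lifting in the remaining steps.

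Second I would kill the $\delta_2$ term. Sequential exchangeability at $s=1$ gives $E[\delta_2(D_1)\mid Y_0,D_1]=E[\delta_2(D_1)\mid Y_0]$, which \eqref{eq:invar_trend_2} equates to the $D_1$-only function $\bar\delta_2(D_1)\equiv E[\delta_2(D_1)]$; applying the orthogonality from Step~1 yields $E[w\,\delta_2(D_1)]=E[w\,\bar\delta_2(D_1)]=0$. Third, for the treatment-effect term I would show $E[w\,D_2\,\tau_2(D_1)]=E[w^2\,\tau_2(D_1)]$. Conditioning on $(Y_0,Y_1,D_1)$ and using Assumption~\ref{assn:mse}a at $s=2$ to factor $D_2$ from $\tau_2(D_1)$, then Assumption~\ref{assn:mse}a at $s=1$ with \eqref{eq:invar_te_2} to reduce $E[\tau_2(D_1)\mid Y_0,Y_1,D_1]$ to $\bar\tau_2(D_1)\equiv E[\tau_2(D_1)]$, and finally the tower property down to $(Y_0,D_1)$, I obtain $E[w\,D_2\,\tau_2(D_1)]=E[w\,E[D_2\mid Y_0,D_1]\,\bar\tau_2(D_1)]$. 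An analogous, easier reduction gives $E[w^2\,\tau_2(D_1)]=E[w^2\,\bar\tau_2(D_1)]$. Subtracting and using $E[D_2\mid Y_0,D_1]-w=E[D_2\mid D_1]$ from \eqref{eq:weights-convex}, the difference becomes $E[w\,E[D_2\mid D_1]\,\bar\tau_2(D_1)]$, a term of the form $E[w\,g(D_1)]$, which vanishes by Step~1.

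Combining the three steps, \eqref{eq:2sls-fwl} collapses to $E[w^2\,\tau_2(D_1)]/E[w^2]$, and since $w^2\geq 0$ the weights $w^2/E[w^2]$ are nonnegative and integrate to one, giving the convex combination \eqref{eq:te-convex}; the non-degeneracy of $w$ guarantees $E[w^2]>0$. I expect the main obstacle to be Step~3, specifically the bookkeeping of which conditioning set each reduction lives on: the factorization of $D_2$ from $\tau_2(D_1)$ must be carried out on the fine $\sigma$-algebra $(Y_0,Y_1,D_1)$ where Assumption~\ref{assn:mse}a at $s=2$ applies, while the limited-heterogeneity simplification and the final cancellation must be pushed down to $(Y_0,D_1)$ and then to $D_1$; keeping these layers straight, and recognizing that $E[D_2\mid Y_0,D_1]-w$ telescopes to the $D_1$-measurable $E[D_2\mid D_1]$, is where the argument is most delicate.
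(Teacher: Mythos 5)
Your proposal is correct and follows essentially the same route as the paper's proof: the same three steps (simplifying $w$ via the affine-in-$D_1$ form of $E[\Delta Y_1\mid Y_0,D_1]$ with $E[\tau_1]\neq 0$, killing the $\delta_2$ term via $E[w\,g(D_1)]=0$, and reducing $E[w\,D_2\,\tau_2(D_1)]$ to $E[w^2\,\tau_2(D_1)]$ using mean SE at $s=2$, full SE at $s=1$ with \eqref{eq:invar_te_2}, and the projection-error identity). The only difference is cosmetic: you establish the final cancellation by subtracting the two expressions and showing the difference is of the form $E[w\,g(D_1)]$, whereas the paper adds and subtracts $E[D_2\mid D_1]$ inside the expectation — algebraically the same step.
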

\bigskip

This proposition confirms that an intuitive causal interpretation of the IV estimand for the coefficient of $\Delta D_{2}$ exists under additional conditions on individual heterogeneity. 
Intuitively, the proof (presented formally in Appendix \ref{apx:proofs}) proceeds in three steps. 
First, derive the expression of the projection errors in \eqref{eq:weights-convex}. 
Next, eliminate the confounding (second) term in \eqref{eq:iv-fwl}.
Finally, establish the convex weights in \eqref{eq:te-convex}.

Only the mean version of sequential exchangeability (Assumption \ref{assn:mse}b) is invoked in the first and second steps; in the third step, the full version (Assumption \ref{assn:mse}a) is required in period $t=1$ to separate restrictions on selection from restrictions on heterogeneity.

Proposition \ref{thm:2sls-te} complements the existing literature. 
It is well-known that the average treatment effect (ATE) $E[\tau_2(d_1)]$ is identified for $d_1\in\{0,1\}$ under sequential exchangeability in Assumption \eqref{assn:mse}; see, e.g., the $g$-formula in \citep{robinshernan2009chapter}.

In comparison, in Proposition \ref{thm:2sls-te} we propose an alternative way to use the sequential exchangeability condition.
By focusing only on causal interpretation of first-differenced IV estimators (instead of a more ambitious goal of estimating $E[\tau_2(d_1)]$ for all $d_1$), we manage to obtain useful insights under additional mean restrictions on the heterogeneous treatment effects (Assumption \ref{assn:limited-heterogeneity}).

We conclude the section by noting that an \textit{outcome-modifying} mean estimand identifies the average treatment effect $E [ \tau_2 (D_1)]$. 
This is a different causal parameter than the one targeted in the $g$-formula. 
This result does not require the restrictions on heterogeneous treatment effects and trends in Assumption \ref{assn:limited-heterogeneity} or the \textit{full} sequential exchangeability of Assumption \ref{assn:mse}a; 
besides Assumptions \ref{assn:po-treatments} and \ref{assn:mse}b, it only requires a standard support assumption on treatment paths below.\bigskip

\begin{assumption}[Full Support of Treatment]
\label{assn:support-treat}
    $ E[ D_2 | Y^1, D_1] \in (0,1)$ almost surely.
\end{assumption}
\bigskip

Under Assumption \ref{assn:support-treat}, define:%
\footnote{
Note that $\widetilde{\Delta Y_2}$ is identical to an expression replacing terms $\Delta Y_2$ with $Y_2$. 
We preserve this redundant trend form in order to clarify the connection to our first-differenced IV estimand.
}

\begin{equation*}
\label{eq:transformed-outcome}
    \widetilde{\Delta Y_2} 
    \equiv 
   \frac{
    \Delta Y_2 - E [ \Delta Y_2 | Y^1, D_1, D_2=0] 
   }{
    E[ D_2 | Y^1, D_1].
   }  
\end{equation*}
Relative to the decomposition \eqref{eq:Y2-decompose}, the transformation from $\Delta Y_2$ to $\widetilde{\Delta Y_2}$ in expectation removes the confounder $\delta_2 (D_1)$ through the term $E [ \Delta Y_2 | Y^1, D_1, D_2 = 0]$ and weights by the treated observations $E [D_2 | Y^1, D_1]$.  
The next proposition establishes that the mean of $\widetilde{ \Delta Y_2}$ identifies the ATE in period 2, given the realized treatments in period 1.\bigskip

\begin{proposition}
\label{thm:iv-te-modified}
    Suppose Assumptions \ref{assn:po-treatments}, \ref{assn:mse}b and \ref{assn:support-treat} hold.
    Then $E[\widetilde{\Delta Y_2}] = E[\tau _{2}(D_{1})]$.
\end{proposition}
\bigskip

Proposition \ref{thm:iv-te-modified} establishes the sample average of $\widetilde{\Delta Y_2}$ as a standalone estimator of the average treatment effect $\tau_2 (D_1)$ under the sequential exchangeability condition.
A corollary of Lemma \ref{thm:iv-fwl} and Proposition \ref{thm:iv-te-modified} is that an adjusted first-differenced IV estimand which replaces $\Delta Y_2$ with:
\begin{equation*}
    \Delta Y_2 ^\dagger 
    \equiv \widetilde{\Delta Y_2} \times \frac{E[w(Y_0, D_1)^2]}{w(Y_0, D_1)}
\end{equation*}
also identifies $E[\tau _{2}(D_{1})]$.
Unlike Proposition \ref{thm:2sls-te}, this result holds even without Assumption \ref{assn:limited-heterogeneity} limiting heterogeneity. 

\subsection{Causal Inference Without Sequential Exchangeability}
\label{sec:causal-no-se}

Sequential exchangeability (Assumption \ref{assn:mse}) 
 played a central role in Section \ref{sec:causal-iv-se}.
Yet this condition does not hold generally in observational settings where selection into realized treatments depends on unobserved heterogeneity correlated with contemporary potential and past observed outcomes. (We provide a detailed discussion on this subject in Section \ref{sec:evaluating-learning}.)

It is therefore natural to ask whether we can recover causal parameters without sequential exchangeability.
In this section, we explore other ways to use dynamic panel data models for causal inference where sequential exchangeability does not hold, and yet certain conditional average treatment effects remain (point) identifiable under auxiliary assumptions. 

\subsubsection{AR(1) Potential Outcome Model}
\label{sec:arpo}

Recall that the dynamic panel data model for observed outcomes in (\ref{eq:DPDM}) is identified under the conditions of \cite{arellanobond91}, including sequential exogeneity of $D_{it}$ and the serial uncorrelation of the reduced-form errors $\varepsilon_{it}$.

Our goal in Section \ref{sec:arpo} is to provide a set of sufficient assumptions on the distribution of {\it potential} outcomes and treatments, so that the implied DGP of {\it observed} outcomes conform to the functional form of DPDM in (\ref{eq:dyn-panel}) \emph{and} satisfy the conditions in \cite{arellanobond91}.
The assumptions we introduce below include homogeneous treatment effects and (conditionally) serially uncorrelated structural errors in potential outcomes. 
Under such assumptions, the first-differenced IV estimand in (\ref{eq:fd-iv}) recovers the homogeneous treatment effects, \emph{even when} sequential exchangeability does not hold.

Consider  a data-generating process in which the series of \emph{potential} outcomes 
follow an AR(1) model with homogeneous causal effects.\bigskip

\begin{assumption}[ARPO]
\label{assn:arpo}
    Potential outcomes follow an AR(1) model:
\begin{equation}
\label{eq:arpo}
Y_{it} (d^t) = 
\beta^* d_{t} + \gamma^* Y_{it-1} (d^{t-1}) + \theta_t^* + \alpha_i^* + \varepsilon_{it}^* (d^t) \text{  for  } t \geq 1,
\end{equation}
with  initial condition $Y_{i0} (d^0) \equiv Y_{i0}$,%
\footnote{Let $d^0$ denote the null vector so that $Y_{i0}(d^0)\equiv Y_{i0}$ conforms with the notation $Y_{it}(d^t)\equiv Y_{it}(d_1,...,d_t)$.}
where $\beta^*, \gamma^*$, and $\theta_t^*$ are constants, $\alpha_i^*$ are individual fixed effects, and $\varepsilon_{it}^* (d^t)$ are time- and history-varying noises satisfying:
\begin{equation}
\label{eq:arpo-exog}
E [ \varepsilon_{it}^* (\cdot) | D_i^t,Y_{i0}, \varepsilon^{*t-1}_i(\cdot), \alpha_i^* ] = 0
\end{equation}
for all $ t\geq 1$, where $\varepsilon_i^{*t} (d^{t}) \equiv (\varepsilon_{i1}^* (d^1), \dots, \varepsilon_{it}^* (d^{t}))$ for $t \geq 1$ and $\varepsilon_i^{*t} (\cdot)$ is the collection of potential error vectors with $\varepsilon^{*0}_i(\cdot)$ a vacuous null vector.
\end{assumption}
\bigskip

Unlike $\gamma$ in the DPDM for \emph{observed} outcomes in \eqref{eq:dyn-panel}, the coefficient $\gamma^*$ for \emph{potential} outcomes in (\ref{eq:arpo}) has a built-in causal interpretation: it measures how changes in present potential outcomes depend on counterfactual changes in lagged (past) potential outcomes, which depend recursively on the full counterfactual treatment history $d^{t-1}$. 

The condition \eqref{eq:arpo-exog} on the structural errors $\varepsilon_{it}^*(\cdot)$ combines sequential exogeneity and serial uncorrelation, conditional on individual fixed effects. 
Note Assumption \ref{assn:arpo} is also consistent with the condition on potential outcomes in Assumption \ref{assn:po-treatments}.

The observation below connects the {\it structural} model for potential outcomes in (\ref{eq:arpo}) to the {\it reduced-form} DPDM for observed outcomes in \eqref{eq:dyn-panel}.
\bigskip

\begin{observation}
\label{thm:id-linear-po}
Suppose the data-generating process for potential outcomes satisfies Assumption \ref{assn:arpo}, and the observed outcomes are $Y_{it} = Y_{it} (D_{i}^t)$. 
Then the observed sequence $(D_{i}^t, Y_{i}^t)$ satisfies \eqref{eq:dyn-panel}  with $ (\gamma, \beta, \theta_t, \alpha_i) =  (\gamma^*, \beta^*, \theta_t^*, \alpha_i^*) $ and $\varepsilon_{it} = \varepsilon_{it}^* (D_{i}^t)$.
\end{observation}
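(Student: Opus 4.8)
The plan is to argue by direct substitution, exploiting the fact that the structural equation in Assumption \ref{assn:po-linear-model} holds as an \emph{identity} in the treatment path $d^t$, not merely in distribution. First I would evaluate \eqref{eq:po-linear-model} at the realized treatment history $d^t = D_i^t$. Because the contemporaneous index $d_t$ becomes $D_{it}$ and the lagged index $d^{t-1}$ becomes the truncation $D_i^{t-1}$ of the realized path, this yields
\[
Y_{it}(D_i^t) = \beta^* D_{it} + \gamma^* Y_{it-1}(D_i^{t-1}) + \theta_t^* + \alpha_i^* + \varepsilon_{it}^*(D_{it}).
\]

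The second step is to invoke the consistency relation built into the framework following Assumption \ref{assn:po-treatments}, namely that realized outcomes equal potential outcomes evaluated at realized treatments: $Y_{it} = Y_{it}(D_i^t)$ and, one period earlier, $Y_{it-1} = Y_{it-1}(D_i^{t-1})$. Substituting both into the display replaces the left-hand side and the lagged potential outcome with their observed counterparts, giving
\[
Y_{it} = \beta^* D_{it} + \gamma^* Y_{it-1} + \theta_t^* + \alpha_i^* + \varepsilon_{it}^*(D_{it}).
\]
Reading off coefficients, this is exactly \eqref{eq:dyn-panel} with $(\gamma,\beta,\alpha_i,\theta_t) = (\gamma^*,\beta^*,\alpha_i^*,\theta_t^*)$ and $\varepsilon_{it} = \varepsilon_{it}^*(D_{it})$, as claimed.

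The only point requiring care, and the closest thing to an obstacle, is justifying that the lagged term in the recursion is the \emph{observed} lagged outcome rather than a counterfactual one. This is where the recursive nesting of the potential-outcome definition matters: since $D_i^{t-1}$ is literally the first $t-1$ components of the realized path $D_i^t$, evaluating the time-$t$ equation along the realized path automatically feeds in the time-$(t-1)$ potential outcome \emph{along the same realized path}, which the consistency relation identifies with $Y_{it-1}$. The degenerate initial condition $Y_{i0}(d^0) \equiv Y_{i0}$ anchors the recursion at $t=0$ and requires no separate treatment, as it already coincides with the observed $Y_{i0}$ for every path. No exogeneity or distributional assumption enters anywhere; the result is a purely algebraic consequence of the structural form evaluated at the realized treatments, which is precisely why it is recorded as an observation rather than a theorem.
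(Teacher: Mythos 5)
Your proof is correct and is exactly the direct-substitution argument the paper has in mind: the paper records this as an Observation and provides no separate proof in the appendix, treating it as an immediate algebraic consequence of evaluating \eqref{eq:po-linear-model} at the realized path and applying the consistency relation $Y_{it}=Y_{it}(D_i^t)$. Your added care about the lagged term being the observed outcome along the same realized path is the right (and only) point that needs checking.
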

\bigskip

Our next proposition shows Assumption \ref{assn:arpo}
implies a ``conditional'' version of mean sequential exchangeability that controls for the fixed effects.
\bigskip

\begin{proposition}
\label{thm:seq-exog-implies-seq-exchange-fe}
    Under Assumption \ref{assn:arpo}, mean sequential exchangeability (Assumption \ref{assn:mse}b) holds conditional on the fixed effect:
    \begin{equation}
    \label{eq:mse-fe}
    E [ Y_{it} (d^t) | D_{is} = d_s', Y_i^{s-1}, D_i^{s-1} = d^{s-1}, \alpha_i^*]
    \quad \text{is invariant in $d_s'$ for $t \geq s \geq 1$}
    \end{equation}
    for all $d^t \in \{0,1\}^t$, and where $d^{s-1}$ is a subvector of the first $s-1$ terms of $d^t$.
\end{proposition}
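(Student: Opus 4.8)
The plan is to exploit the recursive linear structure of Assumption \ref{assn:po-linear-model} to write $Y_{it}(d^t)$ in closed form, and then to show that once I condition on $(D_{is}=d_s',\, Y_i^{s-1},\, D_i^{s-1}=d^{s-1},\, \alpha_i^*)$, every term that could possibly depend on the conditioning value $d_s'$ is either held fixed or has conditional mean zero. I would begin by iterating the AR(1) recursion from period $t$ back to period $s-1$, obtaining
\[
Y_{it}(d^t) = \gamma^{*(t-s+1)} Y_{i,s-1}(d^{s-1}) + \sum_{k=s}^{t}\gamma^{*(t-k)}\bigl(\beta^* d_k + \theta_k^* + \alpha_i^*\bigr) + \sum_{k=s}^{t}\gamma^{*(t-k)}\varepsilon_{ik}^*(d_k),
\]
where the $d_k$ are components of the fixed target path $d^t$ and are unrelated to the conditioning value $d_s'$. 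On the conditioning event $\{D_i^{s-1}=d^{s-1}\}$ the realized history coincides with the potential history, $Y_{i,s-1}(d^{s-1}) = Y_{i,s-1}$, so the leading term is a function of the conditioned variables and contributes a quantity free of $d_s'$; likewise the deterministic middle sum depends only on the fixed target path, the time effects, and $\alpha_i^*$, none of which involve $d_s'$.

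The crux is the error sum: I would show that for each $k$ with $s \le k \le t$,
\[
E\bigl[\varepsilon_{ik}^*(d_k) \mid D_{is}=d_s',\, Y_i^{s-1},\, D_i^{s-1}=d^{s-1},\, \alpha_i^*\bigr] = 0.
\]
The argument is a law-of-iterated-expectations reduction to Assumption \ref{assn:counterfactual-seq-exog}. I would verify that the conditioning $\sigma$-field is coarser than $\mathcal{G}_k \equiv \sigma\bigl(D_i^k,\, \varepsilon_i^{*k-1}(\cdot),\, Y_{i0},\, \alpha_i^*\bigr)$: indeed $D_{is}$ and $D_i^{s-1}$ are components of $D_i^k$ since $k \ge s$, $\alpha_i^*$ is included directly, and—using the same unrolling—the observed history $Y_i^{s-1}$ equals $Y_i^{s-1}(d^{s-1})$ on the conditioning event, which is a function of $Y_{i0}$, the fixed effect, $d^{s-1}$, and the potential errors $\varepsilon_{ij}^*(d_j)$ for $j \le s-1 \le k-1$, all of which are $\mathcal{G}_k$-measurable. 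Given this inclusion, conditioning first on $\mathcal{G}_k$ and invoking Assumption \ref{assn:counterfactual-seq-exog}, which yields $E[\varepsilon_{ik}^*(\cdot)\mid \mathcal{G}_k]=0$, annihilates the inner expectation, so each error term vanishes.

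Combining these observations, the conditional expectation of $Y_{it}(d^t)$ collapses to the leading term plus the deterministic sum, an expression that does not involve $d_s'$, which establishes the claimed invariance. I expect the main obstacle to be the measurability bookkeeping in the crux step—specifically, arguing cleanly that the pre-$s$ observed history is $\mathcal{G}_k$-measurable after substituting potential outcomes for realized outcomes on $\{D_i^{s-1}=d^{s-1}\}$, and confirming that the target-path index $d_k$ is a legitimate argument for the conditional moment restriction. The latter holds because Assumption \ref{assn:counterfactual-seq-exog} is stated for the entire collection $\varepsilon_{ik}^*(\cdot)$ rather than for a single realized path, so evaluating at $d_k$ is covered.
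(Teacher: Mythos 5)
Your proposal is correct and follows essentially the same route as the paper's proof: unroll the AR(1) recursion back to period $s-1$ (the paper's Lemma \ref{thm:Y-eps}), substitute the realized for the potential history on the conditioning event, and reduce the claim to showing each error term $\varepsilon_{ik}^*(d_k)$ has conditional mean zero via the law of iterated expectations and Assumption \ref{assn:counterfactual-seq-exog}. The only cosmetic difference is that the paper establishes the conditioning-set equivalence through the invertible map between $Y_i^{s-1}(d^{s-1})$ and $(Y_{i0},\varepsilon_i^{*s-1}(d^{s-1}))$ given $\alpha_i^*$, whereas you argue the one-directional $\sigma$-field inclusion, which suffices for the same conclusion.
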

\bigskip

Even under Assumption \ref{assn:arpo}, the conditional mean in (\ref{eq:mse-fe}) could be non-degenerate in individual fixed effect $\alpha_i^*$, whose distribution conditional on history $Y_i^{s-1},D_i^{s}$ may depend on the most recent treatment $D_{is}$. 
In such cases, Assumption \ref{assn:mse}b does not hold in general,
because \emph{both} potential outcomes and contemporary treatments depend on individual fixed effects in non-trivial ways.

Despite possible violation of sequential exchangeability, Assumption \eqref{assn:arpo} is sufficient for consistent estimation of the structural parameters $(\beta^*, \gamma^*, \Delta \theta_2^*)$ using the Arellano-Bond IV estimand in \eqref{eq:fd-iv}, as applied to a reduced-form DPDM for observed outcomes in \eqref{eq:dyn-panel}. 
\bigskip

\begin{proposition} 
\label{thm:suff-cond-for-AR1991}
Under Assumption \ref{assn:arpo}, the IV estimand $(\widetilde \beta, \widetilde \gamma, \widetilde{\Delta\theta_2})$ in \eqref{eq:fd-iv} equals $(\beta^*, \gamma^*, \Delta \theta_2^*)$ in \eqref{eq:arpo} when $E(Z'X)$ is non-singular. 
\end{proposition}
\bigskip

Generalization to the identification of later trends $\Delta \theta^*_t$ for $t\geq 3$ is straightforward; it only requires using $X_{it}\equiv(\Delta D_{it},\Delta Y_{it-1}, 1)$ and $Z_{it}\equiv E(X_{it}\vert Y_i^{t-2},D_i^{t-1})$ in the first-differenced IV estimand.

The intuition for Proposition \ref{thm:suff-cond-for-AR1991} is that \eqref{eq:arpo-exog} implies the standard Arellano-Bond assumptions on reduced-form errors $\varepsilon_{it}$ are satisfied. The result then follows from Observation \ref{thm:id-linear-po}.

In summary, under Assumptions \ref{assn:arpo}, a first-differenced IV estimand recovers the causal estimands (Proposition \ref{thm:suff-cond-for-AR1991}), even when sequential exchangeability is violated without conditioning on unobservables (Proposition \ref{thm:seq-exog-implies-seq-exchange-fe}).
Unlike Section \ref{sec:causal-iv-se}, this result does require a structural model of potential outcomes in (\ref{eq:arpo}) as well as econometric restrictions on the errors in potential outcomes.

\subsubsection{AR(1) Untreated Potential Outcome Model}
\label{sec:arupo}

In this subsection, we obtain stronger and more robust results for identifying causal effects, generalizing from the homogeneous treatment effects in the ARPO model (Assumption \ref{assn:arpo}). 
Specifically, we let the treatment effects be heterogeneous, time-varying, and history-dependent:
\begin{equation}
\label{eq:arupo-hte} 
    \tau_{it}^*(d^{t-1}) \equiv Y_{it}(d^{t-1},1) - Y_{it}(d^{t-1},0)
\end{equation}
and impose the structure and sequential exogeneity only on \textit{untreated} potential outcomes.
We refer to this as an \textit{AR(1) untreated potential outcomes} (ARUPO) model.
Besides the heterogeneous treatment effects, we recycle coefficient labeling to avoid introducing new notation. 
\bigskip

\begin{assumption}[ARUPO]
\label{assn:arupo}
    Untreated potential outcomes follow an AR(1) model:
\begin{equation}\label{eq:ARUPO}
    Y_{it}(d^{t-1},0) = \gamma^* Y_{it-1}(d^{t-1}) + \alpha^*_i + \theta^*_t + \varepsilon^*_{it}(d^{t-1}) \text{ for } t\geq 1, 
\end{equation}
with  initial condition $Y_{i0} (d^0) \equiv Y_{i0}$, where $\gamma^*$ and $\theta_t^*$ are constants, $\alpha_i^*$ are individual fixed effects, and $\varepsilon_{it}^* (d^{t-1})$ are time- and history-varying noises in potential outcomes satisfying:
    \begin{equation} \label{assn:ARUPO_1}        
        E[\varepsilon^*_{it}(d^{t-1}) \vert D_i^t=d^t,Y_{i0},\tau_i^{*t-1}(d^{t-2}),\varepsilon_i^{*t-1}(\cdot),\alpha^*_i] = 0,
    \end{equation}
    for all $t\geq 1$ and $d^t\in\{0,1\}^t$.
\end{assumption}
\bigskip

This condition allows $D_{it}$ to be a function of the most recent outcome $Y_{it-1}$ and the full past history $(D_i^{t-1},Y_i^{t-2})$. 
It posits the noises in potential outcomes $\varepsilon^*_{it}(\cdot)$ are uncorrelated with the sequence of past treatments up to $D_i^t$ and all latent elements that determine the past observed outcomes up to $Y_i^{t-1}$.  

The structure of the ARUPO model (Assumption \ref{assn:arupo}) implies the \emph{observed} outcomes are generated by a linear dynamic panel data model with a random coefficient for contemporary treatment:
\begin{equation*}
    Y_{it}\equiv Y_{it}(D_i^t)= \tau_{it}^* D_{it} + \gamma^* Y_{it-1} + \alpha_i^* + \theta_t^* + \varepsilon_{it}^*,
\end{equation*}
where $\tau_{it}^*$ and $\varepsilon_{it}^*$ are shorthand for $\tau^*_{it}(D_i^{t-1})$ and $\varepsilon^*_{it}(D_i^{t-1})$, respectively.
First-differencing the observed outcomes yields:
\begin{equation} \label{eq:fd-obs}
    \Delta Y_{it} = \tau_{it}^* D_{it} - \tau_{it-1}^* D_{it-1} + \gamma^* \Delta Y_{it-1} + \Delta \theta_t^* + \Delta \varepsilon_{it}^* \text{ for } t\geq 2.
\end{equation}
Under Assumption \ref{assn:arupo}, 
\[E(\varepsilon_{it}^* \vert D_i^t,Y_i^{t-1}) = 0 \text{ for all } t\geq 1,\]
because the observed history $Y_i^{t-1}$ is a function of \( (D_i^{t-1},Y_{i0},\tau_i^{*t-1}(D_i^{t-2}),\varepsilon_i^{*t-1}(D_i^{t-2}),\alpha^*_i) \).
This implies the observed history of treatment and outcomes is sequentially exogenous:
\begin{equation} \label{eq:seq-exo}
    E(\Delta\varepsilon_{it}^* \vert H_{it-1}) = 0, \text{ where } H_{it-1} \equiv ( D_i^{t-1},Y_i^{t-2} ).
\end{equation}
This is reminiscent of the moment condition implied by sequentially exogenous regressors in \cite{arellanobond91}. However, the ARUPO model is a generalization over the DPDM for observed outcomes: the first-difference of observed outcomes in (\ref{eq:fd-obs}) involves random coefficients (heterogeneous treatment effects) $\tau_{it}^* D_{it} - \tau_{it-1}^* D_{it-1}$ instead of the product of a constant coefficient and $\Delta D_{it}$.

We first focus on identifying the average treatment effect of the {\it first-time-treated} defined for $t\geq 1$: 
\begin{equation}
\label{eq:atft}
ATFT_t(y_0)\equiv E[\tau_{it}^* \vert D_{it}=1, \mathcal{H}_{t-1} (y_0)].
\end{equation}
where for simplicity of notation we let:
\[
\mathcal{H}_{t-1} (y_0) \equiv \{D_i^{t-1}=0^{t-1}, Y_{i0}=y_0 \} 
\] 
denote the event of an untreated history among those with initial outcome $y_0$.
In Appendix \ref{apx:gen-arupo} we extend our approach to also identify similar causal parameters with a general history of past treatments $d^{t-1}\neq 0^{t-1}$.

To begin, note that \textit{if} the untreated model parameters $(\gamma^*, \Delta \theta_t^*)$ could be identified, then the $ATFT_t (y_0)$ would also be identified through \eqref{eq:fd-obs} and \eqref{eq:seq-exo} upon taking conditional expectations and simplifying:
\begin{equation}
\label{eq:impute-lie}
E [ \Delta Y_{it} | \mathcal{H}_{it-1} (y_0)]
= 
E [ D_{it} | \mathcal{H}_{it-1} (y_0) ] ATFT_t (y_0)
+ 
\gamma^* 
E [ \Delta Y_{it-1} | \mathcal{H}_{it-1} (y_0)]
+ \Delta \theta_t^*.
\end{equation}
Yet, it is not apparent how to identify these untreated model parameters in isolation without further assumptions. 

We show how to identify $ATFT_t(y_0)$, using a conditional Arellano-Bond IV estimand for (\ref{eq:fd-obs}) under an additional condition on how the observed treatments relate to latent components in the ARUPO model (Assumption \ref{assn:arupo}), in particular the heterogeneous treatment effects.
\bigskip 

\begin{assumption} 
\label{assn:MI-hte} 
For any initial condition $y_0$,
\[
    E[\tau^*_{it}(0^{t-1})\vert D_{it}, \mathcal H_{t-1} (y_0), \varepsilon_i^{*t-1}(\cdot), \alpha^*_i] 
    = 
    E[\tau^*_{it}(0^{t-1})\vert D_{it}, \mathcal H_{t-1} (y_0)]. 
\]
\end{assumption}
\bigskip

Remarkably, this condition does not impose a notion of conditional unconfoundedness, because it does not rule out the correlation between the current treatment $D_{it}$ and the heterogeneous treatment effects $\tau^*_{it}(0^{t-1})$, which contribute to the contemporary potential outcome $Y_{it}(0^{t-1},1)$.
Instead, it only states the fixed effect $\alpha^*_i$ and past noises $\varepsilon_i^{*t-1}(\cdot)$ do not affect the mean treatment effect conditional on the observed history.
Of course, this assumption is also trivially satisfied in the ARPO model (Assumption \ref{assn:arpo}) where the treatment effects are homogeneous.

To define (a family of) conditional IV estimands, let $w_{t-1}(\cdot)$ be a chosen function of $H_{it-1}$ and write $W_{it-1} \equiv w_{t-1}(H_{it-1})$. 
Next, fix an initial condition value $Y_{0i} = y_0$ and consider an IV regression  of $\Delta Y_{it}$ on $X_{it} \equiv (\Delta D_{it}, \Delta Y_{it-1}, 1) $ \emph{conditional on} the event $\mathcal H_{t-1} (y_0)$, using the following instruments:

\[
Z_{it} (y_0)
\equiv E[X_{it}\vert W_{it-1}, \mathcal H_{t-1} (y_0)].\] 
These instruments are exogenous, because (\ref{eq:seq-exo}) implies $E[\Delta\varepsilon_{it}^* \vert Z_{it}, \mathcal H_{t-1} (y_0)]=0$.
The estimand in such a (conditional) IV regression is:
\begin{equation} 
\label{eq:IV-estimand}
    \begin{pmatrix}
        \widetilde \beta_t (y_0) \\ \widetilde \gamma_t (y_0) \\ \widetilde{\Delta \theta_t} (y_0)
    \end{pmatrix}
    \equiv 
    E[Z_{it}' (y_0) X_{it}\vert \mathcal H_{t-1} (y_0)]^{-1}E[Z_{it}' (y_0) \Delta Y_{it}\vert \mathcal H_{t-1} (y_0)] 
\end{equation}
Our next result shows that under Assumptions \ref{assn:arupo} and \ref{assn:MI-hte}, the IV regression in (\ref{eq:IV-estimand}) recovers the causal parameter of interest, $ATFT_t(y_0)$, as well as the (constant) state dependence parameter $\gamma^*$ and the time trends $\Delta \theta_t^*$.
\bigskip

\begin{proposition} \label{thm:IV-arupo}
Suppose Assumptions \ref{assn:arupo} and \ref{assn:MI-hte} hold, and $E[Z_{it}'(y_0)Z_{it}(y_0)\vert \mathcal H_{t-1}(y_0)]$ has full rank. Then the IV estimands (\ref{eq:IV-estimand}) for $t \geq 3$ recover the causal terms:
\begin{align*}
\widetilde \beta_t (y_0) &= ATFT_t(y_0) ; \\
\widetilde \gamma_t (y_0) & = \gamma^* ; \\
\widetilde{\Delta \theta_t} (y_0) &= \Delta \theta_t^*.
\end{align*}
\end{proposition}

Both $\gamma^*$ and $\Delta \theta_t^*$ are \textit{over}-identified since the estimands are functions of $y_0$.
The intuition for Proposition \ref{thm:IV-arupo} is as follows.
Under Assumption \ref{assn:MI-hte}, the heterogeneous treatment effects $\tau^*_{it}$ are mean-independent from $Y_i^{t-2}$ conditional on $D_{it}$ and $\mathcal H_{t-1}(y_0)$. 
This implies the conditional mean of $\Delta Y_{it}$ given $W_{it-1}$ and $\mathcal H_{t-1}(y_0)$ is  linear in the instruments $Z_{it}(y_0)$, with slope coefficients $ATFT_t(y_0)$, $\gamma^*$, and an intercept $\Delta\theta_t^*$.\footnote{
    That is, an equality similar to (\ref{eq:impute-lie}) holds after further conditioning on $W_{it-1}$, which is excluded from $ATFT_t(y_0)$ by Assumption \ref{assn:MI-hte}.} 
As a result, these parameters are recovered via an OLS regression of $\Delta Y_{it}$ on $Z_{it}(y_0)$ conditional on $\mathcal H_{t-1}(y_0)$.
By construct, such an estimand also coincides with that of a conditional IV regression in (\ref{eq:IV-estimand}), because $E[Z_{it}' (y_0) X_{it}(y_0)\vert \mathcal H_{t-1} (y_0)] =E[Z_{it}' (y_0) Z_{it}(y_0) \vert \mathcal H_{t-1} (y_0)]$ by the law of iterated expectations. 

In Appendix \ref{apx:gen-arupo} we generalize Assumption \ref{assn:MI-hte} and Proposition \ref{thm:IV-arupo} to recover the average treatment effects beyond the first-treated: \[ATT_t(d^{t-1},y_0)\equiv E[\tau^*_{it}(d^{t-1})\vert D_{it}=1,D_i^{t-1}=d^{t-1},Y_{i0}=y_0] \text{ for } d^{t-1}\neq 0^{t-1}.\]

A natural further generalization of the ARUPO model in Assumption \ref{assn:arupo} would be to replace the constant $ \gamma^*$ with random coefficients $\gamma_i^*$, which vary by individual but not over time; in turn, a natural extension of \eqref{eq:arpo-exog} conditions exogeneity on the extended vector of fixed effects.
Even so, the analysis of \cite{chamberlain2022feedback} suggests identification problems in this case of multiple random coefficients, with only partial identification of causal effects possible (e.g., see also \cite{lee2022dynamic}), even for the untreated outcome model in isolation. 
Therefore we do not consider this case further in this paper.

\section{Evaluating Sequential Exchangeability}
\label{sec:evaluating} 

So far, we have provided a causal interpretation of a first-differenced IV estimand under the sequential exchangeability (SE) condition in Section \ref{sec:causal-iv-se}, and investigated two semiparametric structural models for which causal parameters, such as ATFT, are identified without SE in Section \ref{sec:causal-no-se}. 

Our work complements the popular difference-in-differences method, which estimates ATT under a parallel trends (PT) assumption. 
In the two-period case of our setting with heterogeneous intertemporal treatment effects, the PT assumption, conditional on initial outcomes $Y_0$, amounts to:
\begin{equation}\label{assmp:PT}
E[Y_2(0,0)-Y_1(0)|D_1,D_2,Y_0] \text{ is invariant in } (D_1,D_2).
\end{equation} See, e.g., \cite{chaisemartinhaultfoeuille2024dyn}. 

In this section, we study which kinds of relations between treatment selection and potential outcomes are permitted or ruled out under SE and PT respectively. 
This provides a framework for empirical researchers to assess the circumstances under which SE and PT are justified in  \emph{experimental} or \emph{observational} data. 
Since we no longer distinguish between homogeneous and heterogeneous individual-level parameters as in Subsection \ref{sec:causal-no-se}, we again suppress indexing by individual effects.

In Section \ref{sec:seq-RT}, we confirm that SE holds in an experimental design of sequentially randomized treatment (where the treatment in period $t$ is randomized conditioning on the history of past treatments and outcomes); in contrast (and perhaps surprisingly), PT would not generally hold in such a case.

In Section \ref{sec:evaluating-learning}, we turn to observational data and showcase the nature of SE restrictions within a structural model where treatments are chosen by rational, forward-looking individuals with learning; again we compare these SE restrictions with the PT conditions.

\subsection{Sequentially Randomized Treatments} \label{sec:seq-RT}

Suppose the treatments are sequentially randomized conditional on earlier history of realized treatments and outcomes. For $T=2$, this means:
\begin{eqnarray}
D_{1} \perp (Y_{1}(\cdot ),Y_{2}(\cdot ,\cdot )) & \mid & Y_0, \label{assum:seq-rand1} \\ 
D_{2} \perp Y_{2}(D_1,\cdot ) & \mid & Y_1,Y_0,D_1 \text{.}  \label{assum:seq-rand2}
\end{eqnarray}
Such a treatment assignment mechanism is relevant especially in experimental designs where a researcher may be able to assign randomized treatments in each period after controlling for observed outcomes up to that period. 

We show that while the sequential randomization of treatments implies sequential exchangeability, it is generally not sufficient for parallel trends. 
For simplicity, let the initial condition $Y_{0}$ be degenerate (constant), and suppress it in all conditioning events.
\bigskip

\begin{proposition}\label{pn:SRvsPT}
    Sequential randomization of treatments in (\ref{assum:seq-rand1}) and (\ref{assum:seq-rand2}) implies the (mean) sequential exchangeability in Assumption \ref{assn:mse}b, but not the parallel trends in (\ref{assmp:PT}).
\end{proposition}
\bigskip

The proof of Proposition \ref{pn:SRvsPT} is intuitive and included in the text. Mean sequential exchangeability in Assumption \ref{assn:mse}b holds under (\ref{assum:seq-rand1}) and (\ref{assum:seq-rand2}) because:
\begin{eqnarray*}
    E[Y_{2}(d_{1},d_{2})|D_{2},Y_{1},D_{1}=d_{1}]&=&E[Y_{2}(d_{1},d_{2})|Y_{1},D_{1}=d_{1}]\text{,} \\
    E[Y_{2}(d_{1},d_{2})|D_{1}]&=&E[Y_{2}(d_{1},d_{2})]\text{,} \\
    E[Y_{1}(d_{1})|D_{1}]&=&E[Y_{1}(d_{1})]\text{,}
\end{eqnarray*}
for all $d_{1},d_{2}\in \{0,1\}$.

In contrast, the parallel trends condition in (\ref{assmp:PT}) is not implied under (\ref{assum:seq-rand1}) and (\ref{assum:seq-rand2}). 
To see this, note:
\begin{eqnarray*}
&&E[Y_{2}(0,0)-Y_{1}(0)|D_{2}=d_{2},D_{1}=0] \\
&=&\int
E[Y_{2}(0,0)-Y_{1}(0)|D_{2}=d_{2},Y_{1}(0)=y_{1},D_{1}=0]dF_{Y_{1}(0)}(y_{1}|D_{2}=d_{2},D_{1}=0)
\\
&=&\int \left[ \mu _{2}(y_{1})-y_{1}\right]
dF_{Y_{1}(0)}(y_1|D_{2}=d_{2},D_{1}=0)\text{,}
\end{eqnarray*}%
where $\mu _{2}(d_2,s)\equiv E[Y_{2}(0,0)| D_2 = d_2, Y_{1}(0)=s,D_{1}=0]$ does not vary with $d_2$ under (\ref{assum:seq-rand2}), and hence the first argument is suppressed. 
Besides, by construction, 
\begin{eqnarray*}
F_{Y_{1}(0)}(y | D_{2}=1,D_{1}=0)&=&\frac{\int_{-\infty }^{y}p(s)dG(s)}{%
\int_{-\infty }^{+\infty }p(s)dG(s)}\text{,} \\
F_{Y_{1}(0)}(y | D_{2}=0,D_{1}=0)&=&\frac{\int_{-\infty }^{y}[1-p(s)]dG(s)%
}{\int_{-\infty }^{+\infty }[1-p(s)]dG(s)}\text{,}
\end{eqnarray*}
where $p(s)\equiv \Pr \{D_{2}=1|Y_{1}(0)=s,D_{1}=0\}$, and $G(s)\equiv \Pr\{Y_{1}(0)\leq s|D_{1}=0\}=\Pr \{Y_{1}(0)\leq s\}$ under (\ref{assum:seq-rand1}). 
Thus, when $p(\cdot )$ is non-degenerate and non-linear, $F_{Y_{1}(0)}(y|D_{2}=d_2,D_{1}=0)$ generally varies with $d_2$. 
Therefore, the parallel trends condition in (\ref{assmp:PT}) does not hold in general.

Note that (\ref{assmp:PT}) does hold in a \textit{special} case where $p(\cdot )$ is degenerate, e.g., $D_{2}$ is randomly assigned independent of past $Y_1(\cdot)$ conditional on $D_1=0$. 
In such a case, $F_{Y_{1}(0)}(\cdot |D_2=d_2,D_1=0)=G(\cdot )$ does not vary with $d_2$ and (\ref{assmp:PT}) holds. This is consistent with results in \cite{marx2024parallel} on failure of parallel trends in models where treatments are functions of past outcomes. 

\subsection{Forward-Looking Treatment Choices with Learning} 
\label{sec:evaluating-learning}

We now study when the SE and PT conditions are justified in an observational setting, where individuals self-select into treatments through rational, forward-looking choices based on beliefs updated through learning. 

Consider a structural model with $T=2$, where the potential outcomes are:
\begin{eqnarray}
Y_{1}(d_{1}) &=&\lambda _{1}(d_{1},Y_{0},\alpha(d_1),\varepsilon_{1})\text{,}  \label{eq:po-choice} \\
Y_{2}(d_{1},d_{2}) &=&\lambda _{2}(d_{1},d_{2},Y_{0},Y_{1}(d_{1}),\alpha(d_2),\varepsilon_2)\text{,}  \notag
\end{eqnarray}%
with $\lambda_t(\cdot)$ being deterministic functions, $\alpha(\cdot)$ potential individual fixed effects that vary with contemporary counterfactual treatments $d_t$, and $\varepsilon_t$ idiosyncratic, time-varying structural errors in potential outcomes.
The DGP for realized treatment choices are:
\begin{eqnarray}
D_{1} &=&\phi _{1}(Y_{0},\xi_0,\eta_1),  \label{eq:treatment-choice} \\
D_{2} &=&\phi _{2}(Y^1,D_{1},\xi_1(D_1),\eta_2), \notag
\end{eqnarray}
where $Y^1\equiv(Y_0,Y_1)\equiv (Y_0,Y_1(D_1))$, $\phi_t(\cdot)$ are deterministic functions, and 
$\xi_0$ and $\xi_1(d_1)$ are potential individual fixed effects which may evolve over time, with $\xi_1(d_1)$ possibly subsuming $\xi_0$ as a subvector, and $\eta_t$ are idiosyncratic, time-varying noises.%
\footnote{Recall that $D_{0}=0$ w.p.1, and is dropped from $\xi_0(D_{0}),\xi_1(D_0,D_1)$ for simplicity.}

Our results below generalize to the cases with $T\geq 3$, or where the idiosyncratic noises $\varepsilon_t$ and $\eta_t$ are also indexed by contemporary counterfactual treatments $d_t$. 
Such generalizations do not require qualitatively different argument, and are omitted for conciseness. 

We provide an example of a structural model where the treatment rule in (\ref{eq:treatment-choice}) arises endogenously from the rational choices by forward-looking individuals with learning. \bigskip

\noindent \textbf{Example 1.} (Dynamic treatment choices with learning.) Consider a model with two periods $T=2$. 
Each individual's \textit{per-period, ex post} payoff from choosing $D_1 = d_1$ at $t=1$ is: 
\[ u_1(d_1,Y_1(d_1)) + \eta_1(d_1).\]
Moreover, given $D_1$, the instantaneous, ex post payoff from choosing $D_2 = d_2$ at $t=2$ is:
\[u_2(d_2,Y_2(D_1,d_2)) + \eta_2(d_2).\]
Potential outcomes $Y_t(\cdot)$ are determined by $(Y_0,\varepsilon_1,\varepsilon_2)$ and individual fixed effects $ \alpha \equiv (\alpha(0),\alpha(1)) $ via deterministic functions $\lambda_t(\cdot)$ as defined in (\ref{eq:po-choice}).

Individuals know all functions $u_{t}(\cdot)$ and $\lambda_t(\cdot)$, but never observe the potential fixed effects $\alpha$ and idiosyncratic errors $\varepsilon_t$ in $Y_t(\cdot)$.
In each period $t$, individuals observe $\eta_1\equiv (\eta_1(0),\eta_1(1))$, and know the distribution of the idiosyncratic errors $\varepsilon_t$. 

At the start of the first period $t=1$, each individual  holds a prior belief for $\alpha$. 
This prior is characterized by a parameter vector $\kappa_1\equiv\varphi_1(Y_0,\xi^*)$, where $\xi^*$ is a persistent fixed effect known privately to the individual. 
Similarly, at the start of the next period $t=2$, the individual uses past observed outcomes $Y^1$ and the realized treatment $D_1$ to update the posterior for $\alpha$, which is characterized by $\kappa_2\equiv\varphi_2(Y^1,D_1,\xi^*)$.

In period $t=2$, a rational individual with learning chooses:
\begin{equation}\label{eq:D1_rule}
    D_2=\argmax_{d_2\in\{0,1\}} \underset{\mu_{2}(d_2,Y^1,D_1,\xi^*)}{\underbrace{\int u_2(d_2,y_2)dF_{Y_2(D_1,d_2)\vert Y^1,D_1,\xi^*}(y_2)}} + \eta_2(d_2),
\end{equation}
where $F_{Y_2(D_1,d_2)\vert Y^1,D_1,\xi^*}$ denotes the posterior for $Y_2(D_1,d_2)$ according to $\kappa_2$.\footnote{
    Formally, \(F_{Y_2(D_1,d)\vert Y^1,D_1,\xi^*}(y) = \int \left[ \int 1\{\lambda_2(D_1,d,Y^1,a,e)\leq y\}dF_{\varepsilon_2}(e)\right]dF_{\alpha(d)\vert \kappa_2}(a)\), where $F_{\alpha(d)\vert \kappa_2}$ denotes the posterior for $\alpha(d)$ at the start of period $t=2$ given $(Y^1,D_1,\xi^*)$.}
In period $t=1$, a rational, forward-looking individual chooses:
\begin{equation}\label{eq:D2_rule}
    D_1 = \argmax_{d_1\in\{0,1\}} \int \left[u_1(d_1,y_1)+ \rho V_2(Y_0,y_1,d_1,\xi^*)\right] dF_{Y_1(d_1)\vert Y_0,\xi^*}(y_1) + \eta_1(d_1),
\end{equation}
where $\rho$ is the time discount factor, \(F_{Y_1(d)\vert Y_0,\xi^*}\) is the prior of $Y_1(d)$ implied by $\kappa_1$, and: 
\[V_2(Y_0,Y_1,D_1,\xi^*)\equiv\int \max_{d_2\in\{0,1\}}  [\mu_2(d_2,Y_0,Y_1,D_1,\xi^*)+\tilde\eta_2(d_2)]F_{\eta_2}(\tilde\eta_2).\]
The solutions of (\ref{eq:D1_rule}) and (\ref{eq:D2_rule}) imply that rational treatment choices by forward-looking individuals with learning necessarily takes the form of (\ref{eq:treatment-choice}) with persistent individual fixed effect in the prior and posterior: $\xi_0 = \xi_1(\cdot) = \xi^*$. \hfill \textbf{(End of Example 1.)}
\bigskip

The next proposition shows, in an observational context where potential outcomes and realized treatments are determined as in (\ref{eq:po-choice}) and (\ref{eq:treatment-choice}), a conditional independence condition is sufficient for sequential exchangeability, but does not imply parallel trends in general.\bigskip

\begin{proposition} 
\label{thm:evaluating-obs}
Suppose the potential outcomes and realized treatments are determined as in (\ref{eq:po-choice}) and (\ref{eq:treatment-choice}), and:
\begin{equation}
(\alpha,\varepsilon_1,\varepsilon_2) \perp (\xi_0,\xi_1(d_1),\eta_1,\eta_2) \mid Y_{0} \text{ for all } d_1\text{.}  \label{assn:CI}
\end{equation}%
Then sequential exchangeability in Assumption \ref{assn:mse}b holds while $E[Y_{2}(0,0)-Y_{1}(0)|D_{1},D_{2}, Y_0]$ is in general non-degenerate in $(D_{1},D_{2})$.
\end{proposition}
\bigskip 

The conditional independence in (\ref{assn:CI}) 
posits that the fixed effects and idiosyncratic errors influencing posteriors are independent from those determining the potential outcomes, after controlling for the initial outcome $Y_0$.
It does \emph{not} further restrict learning. 

For example, consider a simplistic case where $\varepsilon^2,\eta^2$ are idiosyncratic noises independent from all other variables, and where $\xi_1(d_1)=\xi_0$ almost surely. 
That is, the evolution of the posterior depends on an initial shifter of the prior $\xi_0$ and \emph{observed} history $(D^{t-1},Y^{t-1})$. 
Then (\ref{assn:CI}) holds when $\alpha(\cdot)$ is independent from $\xi_0$ given $Y_0$.

The general failure of the PT condition in Proposition \ref{thm:evaluating-obs} is intuitive. With forward-looking and learning, the treatment choices $D^{2}=(D_{1},D_{2})$ are a non-trivial function of a realized history of observed outcomes $Y_{0},Y_{1}(D_{1})$. 
Hence conditioning on \textit{different} realizations of $D^{2}$ amounts to conditioning on \textit{different} events defined over the sample space of $(Y_0,\alpha(\cdot),\varepsilon_1,\xi_0,\xi_1(\cdot),\eta^2)$.
Because $Y_{2}(0,0)-Y_{1}(0)$ is a function of $(Y_{0},\alpha(\cdot),\varepsilon^2)$, its distribution (and mean) generally vary with the conditioning event defined by $D^{2}$. 
This non-degeneracy holds in general, even after controlling for the realization in $Y_{0}$.

\section{Conclusion}
\label{sec:conclusion}

We studied the causal interpretation of dynamic panel data estimators in models where potential outcomes depend on past treatments. 
We derived sufficient conditions for instrumented first-difference estimators to be causally interpretable.
A motivation and underlying assumption for one set of these results was sequential exchangeability. 
Besides investigating the plausibility of this assumption in both experimental and observational settings (with a structural model of dynamic treatment choices), we showed how to conduct causal inference using dynamic panel data estimators when this assumption fails. 
A natural avenue for future work is to consider (partial) identification when sequential exchangeability is relaxed or holds only conditional on a set of unobservables. 

\newpage

\appendix 

\section{Proofs}
\label{apx:proofs}

\begin{proof}[Proof of Proposition \ref{thm:mse-trend-equivalence}]
\noindent
We fix $s \geq 1$ throughout the following proof. For $t = s$, the equivalence is immediate:
\begin{align*}
&
E [ Y_s (d^s) | Y^{s-1}, D^{s-1} = d^{s-1}, D_s = d_s'] \quad \text{is invariant in $d_s'$} \\
\iff &
E [ Y_s (d^s) | Y^{s-1} (d^{s-1}), D^{s-1} = d^{s-1}, D_s = d_s'] \quad \text{is invariant in $d_s'$} \\
\iff &
E [ Y_s (d^s) | Y^{s-1} (d^{s-1}), D^{s-1} = d^{s-1}, D_s = d_s'] - Y_{s-1} (d^{s-1}) \quad \text{is invariant in $d_s'$} \\
\iff &
E [ Y_s (d^s) - Y_{s-1} (d^{s-1}) | Y^{s-1} (d^{s-1}), D^{s-1} = d^{s-1}, D_s = d_s'] \quad \text{is invariant in $d_s'$.}
\end{align*}

\noindent
For $t > s$, Assumption \ref{assn:mse}b implies: 
\begin{align*}
& 
E [ Y_t (d^t) | Y^{s-1}, D^{s-1} = d^{s-1}, D_s = d_s'] \quad \text{is invariant in $d_s'$} \\
& 
E [ Y_{t-1} (d^{t-1}) | Y^{s-1}, D^{s-1} = d^{s-1}, D_s = d_s'] \quad \text{is invariant in $d_s'$} 
\end{align*}
and therefore: 
\[
E [ Y_t (d^t) - Y_{t-1} (d^{t-1}) | Y^{s-1}, D^{s-1} = d^{s-1}, D_s = d_s'] \quad \text{is invariant in $d_s'$}.
\]

\noindent
We prove the reverse direction of implication for $t > s$ inductively.
Suppose we have shown that, for some $t \geq s$, the conditions on the trends $Y_t(d^{t})-Y_{t-1}(d^{t-1})$ in Proposition \ref{thm:mse-trend-equivalence} imply:
\[
    E [ Y_t (d^t) | Y^{s-1}, D^{s-1} = d^{s-1}, D_s = d_s'] \quad \text{is invariant in $d_s'$}. 
\]
By the conditions on the trends, we also have: 
\[
    E [ Y_{t+1} (d^{t+1}) - Y_t (d^t) | Y^{s-1}, D^{s-1} = d^{s-1}, D_s = d_s'] \quad \text{is invariant in $d_s'$.} 
\]
Combining these two equalities then implies: 
\[
    E [ Y_{t+1} (d^{t+1}) | Y^{s-1}, D^{s-1} = d^{s-1}, D_s = d_s'] \quad \text{is invariant in $d_s'$.} 
\]
Finally, recall we already showed the equivalence (between the two sets of conditions on the levels and the trends of potential outcomes) for $t = s$ at the start of this proof. Hence by induction the claim of the reverse direction of implication for $t>s$ holds.
\end{proof}

\bigskip

\begin{proof}[Proof of Lemma \ref{thm:iv-fwl}] 
As in \eqref{eq:fd-iv}, let $\Delta X \equiv (\Delta D_2, \Delta Y_1, 1) $ denote the (row) vector of covariates with $T = 2$ and $Z \equiv (E [ \Delta D_2 | Y_0, D_1], E [ \Delta Y_1| Y_0, D_1], 1)$ denote the vector of instruments. 
To reiterate, the first-differenced IV estimand $(\widetilde \beta, \widetilde \gamma, \widetilde{\Delta \theta_2})'$ in \eqref{eq:fd-iv} are: 
\[
    E [ Z' \Delta X]^{-1} E [ Z' \Delta Y_2]
\]
Note that $Z$ contains conditional expectations of each element of $\Delta X$. By the law of iterated expectations, this estimand is equal to an OLS estimand: 
\[
    E [ Z' Z]^{-1} E [ Z' \Delta Y_2].
\]
By the formula for partitioned regressions, a.k.a.~a population version of the Frisch-Waugh-Lovell Theorem, the estimand for the slope coefficient of the first component in $Z$ above is equal to 
\[
\frac{E [w (Y_0, D_1) \Delta Y_2 ]}{E[ w (Y_0, D_1)^2]},
\]
where $w (Y_0, D_1)$, defined in \eqref{eq:weights}, is the projection error from the linear projection of $E(\Delta D_{2}|Y_{0},D_{1})$ onto $E(\Delta Y_{1}|Y_{0},D_{1})$ and a constant intercept.
Plugging in the decomposition \eqref{eq:Y2-decompose} of $\Delta Y_2$ yields the desired expression in \eqref{eq:iv-fwl}. 
\end{proof}

\ \ \

For proofs of Propositions \ref{thm:2sls-te} and \ref{thm:iv-te-modified}, it is useful to expand the definition of (mean) sequential exchangeability for $T=2$ to \emph{full} sequential exchangeability, which is: 
\begin{align}
\label{eq:se1}
    (Y_1 (d_1), Y_2 (d_1, d_2)) \perp D_1 | Y_0, \\ 
\label{eq:se2}
    Y_2 (d_1, d_2) \perp D_2 | Y_0, Y_1, D_1=d_1,
\end{align}
whereas mean sequential exchangeability is:
\begin{align}
    E [ Y_1 (d_1) | Y_0, D_1] &= E [ Y_1 (d_1) | Y_0]; 
    \label{eq:mse1} \\
    E [ Y_2 (d_1, d_2) | Y_0, D_1] &= E [ Y_2 (d_1, d_2) | Y_0] ;
    \label{eq:mse2} \\
    E [ Y_2 (d_1, d_2) | Y_0, Y_1, D_1 = d_1, D_2] &= 
    E [ Y_2 (d_1, d_2) | Y_0, Y_1, D_1 = d_1]
    \label{eq:mse3} 
\end{align}
for all $d_1, d_2 \in \{0,1\}$. 

\ \ \

\begin{proof}[Proof of Proposition \ref{thm:2sls-te}]
By Lemma \ref{thm:iv-fwl}, the (first-differenced) IV estimand $\widetilde\beta $ is:
\begin{equation}\label{eq:IV-est-apx}
    E \left[ \frac{w (Y_0, D_1) D_2 }{E[w(Y_{0},D_{1})^{2}]} \times \tau_2 (D_1) \right] + E \left[ \frac{w (Y_0, D_1)}{E[w(Y_{0},D_{1})^{2}]} \times \delta_2 (D_1) \right],    
\end{equation}%
where $w(Y_{0},D_{1})$ is the error in the linear projection of $E(\Delta D_{2}|Y_{0},D_{1})$ on $E(\Delta Y_{1}|Y_{0},D_{1})$.

First, we derive a closed-form expression for the projection errors $w(Y_0, D_1)$.
We have: 
\begin{align*}
    E [ \Delta Y_1 | Y_0, D_1] 
    &= 
    E [ \delta_1 + D_1 \tau_1 | Y_0, D_1] \\
    &= 
    E [ \delta_1 | Y_0, D_1] + D_1 E [ \tau_1 | Y_0, D_1] \\
    &= 
    E [ \delta_1 | Y_0] + D_1 E [ \tau_1 | Y_0] \\
    &= 
    E [ \delta_1] + D_1 E [ \tau_1]
\end{align*}
where the first equality is due to the decomposition in \eqref{eq:Y1-decompose}, the second equality rearranges, the third equality follows from mean sequential exchangeability (Assumption \ref{assn:mse}b), specifically \eqref{eq:mse1}, and the fourth equality follows from limited heterogeneity (Assumption \ref{assn:limited-heterogeneity}), specifically \eqref{eq:invar_trend_1} and \eqref{eq:invar_te_1}. 
Furthermore, by \eqref{eq:invar_te_1}, this right-hand side above is strictly monotone in $D_1$.
Therefore, the linear projection of $E(\Delta D_{2}|Y_{0},D_{1})$ on $E(\Delta Y_{1}|D_{1},Y_{0})$ is identical to its mean conditional on $D_{1}$. This implies: 
\begin{align}
w(Y_{0},D_{1}) 
&= 
E(\Delta D_{2}|Y_{0},D_{1})-E[E(\Delta
D_{2}|Y_{0},D_{1})|D_{1}] 
\notag \\
&=
E(D_{2}|Y_{0},D_{1})-E(D_{2}|D_{1}).
\label{eq:projection-error}
\end{align}

Next, we eliminate the second (bias) term in \eqref{eq:IV-est-apx} by showing its numerator:
\begin{align*}
    E [ w (Y_0, D_1) \delta_2 (D_1)] 
\end{align*}
is zero.
Applying the law of iterated expectations (LIE) over partitions $(Y_{0},D_{1})$, we write this numerator as: 
\begin{equation}
\label{eq:second-numerator-lie}
    E [ w(Y_{0},D_{1})E[\delta _{2}(D_{1})|Y_{0},D_{1}] ].
\end{equation}
The inner expectation conditional on a realized treatment $D_1=d_1$ equals:
\begin{equation}
\label{eq:second-numerator-inner-expectation-1}
    E[\delta_{2}(D_{1})|Y_{0},D_{1} = d_1] 
    = 
    E[\delta_{2}(d_{1})|Y_{0},D_{1} = d_1]
    = 
    E[\delta_{2}(d_{1})|Y_{0}]
    = 
    E[\delta_{2}(d_{1})]
\end{equation}
where the first equation plugs in realized treatment, the second follows from mean sequential exchangeability (Assumption \ref{assn:mse}b), specifically \eqref{eq:mse1} and \eqref{eq:mse2}, and the third follows from limited heterogeneity (Assumption \ref{assn:limited-heterogeneity}), specifically \eqref{eq:invar_trend_2}.
Furthermore, we have: 
\begin{align*}
    E [ \delta_2 (D_1) | D_1 = d_1] 
    &= 
    E [ E [ \delta_2 (D_1) | Y_0, D_1 = d_1] | D_1 = d_1] \\
    &=
    E [ E [ \delta_2 (d_1) ] | D_1 = d_1] \\
    &= 
    E [ \delta_2 (d_1)],
\end{align*}
where the first equality follows from the LIE, the second from the previous equality \eqref{eq:second-numerator-inner-expectation-1}, and the third from taking an expectation over a constant.
Combining with \eqref{eq:second-numerator-inner-expectation-1}, it follows that: 
\begin{equation}
\label{eq:second-numerator-inner-expectation-2}
    E[\delta_{2}(D_{1})|Y_{0},D_{1}] = E[\delta_{2}(D_{1})| D_{1}].
\end{equation}
Returning to expression \eqref{eq:second-numerator-lie} of the second numerator: 
\begin{align*}
    E [ w(Y_{0},D_{1})E[\delta _{2}(D_{1})|Y_{0},D_{1}] ]
    &= 
    E [ w(Y_{0},D_{1})E[\delta _{2}(D_{1})|D_{1}] ] \\
    &= 
    E [ E[ w (Y_0, D_1) | D_1] E [ \delta_2 (D_1) | D_1]] \\
    &= 
    0,
\end{align*}
where the first equality follows from \eqref{eq:second-numerator-inner-expectation-2}, the second by the LIE over partitions $D_1$, and the third because $E [ w (Y_0, D_1) | D_1] = 0$ by definition of $w(Y_0,D_1)$.

The last step of the proof is to establish the convex weights over heterogeneous treatment effects in the first term of \eqref{eq:IV-est-apx}, with numerator $E[w(Y_{0},D_{1}) D_{2} \tau _{2}(D_{1})]$.
Applying the LIE over partitions $(Y_{0},Y_{1},D_{1},D_{2})$ and invoking mean sequential exchangeability (Assumption \ref{assn:mse}b), specifically \eqref{eq:mse3}, yields:
\begin{align}
E[w(Y_{0},D_{1}) D_{2} \tau_{2}(D_{1})] 
&= 
E [ w(Y_{0},D_{1}) D_{2} E[\tau_{2}(D_{1})|Y_{0},Y_{1},D_{1},D_{2}]] \notag \\
&=
E [ w(Y_{0},D_{1})D_{2} E[\tau_{2}(D_{1})|Y_{0},Y_{1},D_{1}] ]
\label{eq:first-numerator-expression-1}
\end{align}
Focusing on the inner expectation for a treatment realization $D_1=d_1$,
\begin{equation}
\label{eq:first-numerator-inner-expectation-1}
    E [ \tau_2 (D_1) | Y_0, Y_1, D_1 = d_1] 
    = 
    E [ \tau_2 (d_1) | Y_0, Y_1 (d_1), D_1 = d_1] 
    = 
    E [ \tau_2 (d_1) | Y_0, Y_1 (d_1)] 
    = 
    E [ \tau_2 (d_1)]
\end{equation}
where the first equality follows by definition of the observed variables at realized treatment, the second by the full sequential exchangeability (Assumption \ref{assn:mse}a) for period 1 in \eqref{eq:se1}, specifically its implication that $D_1 \perp Y_2 (\cdot) | Y_0, Y_1 (d_1)$ for any $d_1$, and the third by limited heterogeneity (Assumption \ref{assn:limited-heterogeneity}), specifically \eqref{eq:invar_te_2}.
Similarly to the previous step, we have: 
\begin{align*}
    E [ \tau_2 (D_1) | D_1 = d_1] 
    &= 
    E [ E [ \tau_2 (D_1) | Y_0, Y_1, D_1 = d_1] | D_1 = d_1]  \\ 
    &= 
    E [ E [ \tau_2 (d_1) ] | D_1 = d_1] \\
    &= 
    E [ \tau_2 (d_1)],
\end{align*}
where the first equality follows from the LIE, the second from \eqref{eq:first-numerator-inner-expectation-1}, and the third from taking an expectation over a constant.
Combining with \eqref{eq:first-numerator-inner-expectation-1}, it follows that: 
\begin{equation}
\label{eq:first-numerator-inner-expectation-2}
    E[\tau_{2}(D_{1})|Y_{0},Y_1, D_{1}] = E[\tau_{2}(D_{1})| D_{1}].
\end{equation}
Thus, the right-hand side of \eqref{eq:first-numerator-expression-1} is:
\begin{align*}
    E [ w(Y_{0},D_{1})D_{2} E[\tau_{2}(D_{1})|Y_{0},Y_{1},D_{1}] ] 
    &= 
    E [ w(Y_{0},D_{1})D_{2} E[\tau_{2}(D_{1})|D_{1}] ] \\
    &= 
    E [ w(Y_{0},D_{1})E [ D_{2}|D_{1},Y_{0} ] E[\tau _{2}(D_{1})|D_{1}] ] \\
    &= 
    E [ w(Y_{0},D_{1}) ( E[D_{2}|D_{1},Y_{0}] -E [ D_{2}|D_{1}] ) E[\tau_{2}(D_{1})|D_{1}] ] \\
    &= 
    E [ w(Y_{0},D_{1})^2 E[\tau_{2}(D_{1})|D_{1}] ] \\
    &= 
    E [ w(Y_{0},D_{1})^2 E[\tau_{2}(D_{1})|Y_0, Y_1, D_{1}] ] \\
    &= 
    E [ E[ w(Y_{0},D_{1})^2 \tau_{2}(D_{1})|Y_0, Y_1, D_{1}] ] \\
    &=
    E [ w(Y_{0},D_{1})^2 \tau_2 (D_1)] 
\end{align*}
where the first equality follows from \eqref{eq:first-numerator-inner-expectation-2}, the second by the LIE over partition $(D_{1},Y_{0})$, the third by the LIE and definition of $w(Y_0,D_1)$ (which implies $E[w(Y_0,D_1)g(D_1)]=0$ for any function $g(\cdot)$ with a sole argument $D_1$), the fourth by the previously derived projection errors \eqref{eq:projection-error}, the fifth again by \eqref{eq:first-numerator-inner-expectation-2}, the sixth by rearrangement, and the seventh again by the LIE.
\end{proof}

\bigskip
\begin{proof}[Proof of Proposition \ref{thm:iv-te-modified}]
    Plugging in the decomposition of $ \Delta Y_2 $ in \eqref{eq:Y2-decompose}, we get:
    \begin{equation}
    \label{eq:EDeltaY2}
        E[\widetilde{\Delta Y_2}] 
        =
        E \left[ 
        \frac{
            D_{2}\tau_{2}(D_{1}) + \delta_{2}(D_{1}) - E [ \Delta Y_2 | Y^{1}, D_1, D_2=0] 
        }{
            E[ D_2 | Y^{1}, D_1]
        }  
        \right].
    \end{equation}
    Next, we show the second and third terms in the numerator cancel. 
    Note:
    \begin{align*}
        E [ \Delta Y_2 | Y^{1}, D_1, D_2=0] 
        &= 
        E [ Y_2 - Y_1  | Y_0, Y_1, D_1, D_2=0 ] \\
        &=
        E [ Y_2 (D_1,0) | Y_0, Y_1, D_1, D_2 = 0] - Y_1 \\
        &=
        E [ Y_2 (D_1,0) | Y_0, Y_1, D_1] - Y_1 \\
        &= 
        E [ \delta_2 (D_1) | Y^1, D_1],
    \end{align*}
    where the first equality follows by definition of $\Delta Y_2$, the second by plugging in observed variables and rearrangement, the third by mean sequential exchangeability (Assumption \ref{assn:mse}b), specifically \eqref{eq:mse3}, and the fourth by rearrangement and the definition of $\delta_2$ in \eqref{eq:potential-trend}. 
    Iterating expectations over the previous equality, 
    \[
            E [  E [ \Delta Y_2 | Y^{1}, D_1, D_2=0] ]
        =
        E [  E [ \delta_2 (D_1) | Y^{1}, D_1] ]
        =
        E [ \delta_2 (D_1)].
    \]
    Plugging into \eqref{eq:EDeltaY2} and simplifying, we have: 
    \begin{equation}
        \label{eq:EDeltaY2-IPW}
        E[\widetilde{\Delta Y_2}] 
        =
        E \left[ 
        \frac{
            D_{2}\tau_{2}(D_{1})
        }{
            E[ D_2 | Y^{1}, D_1]
        }  
        \right].
    \end{equation}
    We finish the proof by showing how $E [\widetilde{\Delta Y_2}] $ thus identifies the ATE: 
    \begin{align*}
        E[\widetilde{\Delta Y_2}] 
        &=
        E[E[\widetilde{\Delta Y_2} | Y^1, D_1]] \\
        &= 
        E \left[ 
        \frac{
            E [ D_{2}\tau_{2}(D_{1}) | Y^{1}, D_1]
        }{
            E[ D_2 | Y^{1}, D_1]
        }  
        \right]    \\
        &= 
        E \left[ 
        \frac{
            E [ D_{2}E [\tau_{2}(D_{1}) | Y^1, D_1, D_2] | Y^{1}, D_1]
        }{
            E[ D_2 | Y^{1}, D_1]
        }  
        \right]    \\
        &=
        E \left[ 
        \frac{
            E [ D_{2}E [\tau_{2}(D_{1}) | Y^1, D_1] | Y^{1}, D_1]
        }{
            E[ D_2 | Y^{1}, D_1]
        }  
        \right]    \\
        &= 
        E [ E [ \tau_{2}(D_{1}) | Y^1, D_1]] \\ 
        &= 
        E [ \tau_2 (D_1) ],
    \end{align*}
    where the first equality follows by the law of iterated expectations (LIE) over $(Y^1, D_1)$, the second by \eqref{eq:EDeltaY2-IPW}, 
    the third again by the LIE over $(Y^1, D_1, D_2)$, the fourth by mean sequential exchangeability (Assumption \ref{assn:mse}b), specifically \eqref{eq:mse3}, the fifth by rearrangement and cancellation of terms $E [D_2 | Y^1, D_1]$, and the last again by the LIE. (Note mean sequential exchangeability is invoked only for $D_2$ in \eqref{eq:mse3}.)
\end{proof}

\bigskip
 
In what follows, it will be useful  to collect the following relationships between potential outcomes and composite errors $\nu_{it}^* (d^t) \equiv \alpha_i^* + \varepsilon_{it}^* (d^t)$ in the linear dynamic model of potential outcomes \eqref{eq:arpo}.

\begin{lemma}
\label{thm:Y-eps}
Suppose the potential outcomes follow an AR(1) model in (\ref{eq:arpo}).
For any $d^t \in \{0,1\}^t$ and $t \geq s \geq 1$,
\begin{align}
\label{eq:eps-Y}
    \nu_{it}^* (d^t) 
    &=
    Y_{it} (d^t) - \gamma^* Y_{it-1} (d^{t-1}) - \beta^* d_t - \theta_t^*,  \\
\label{eq:Y-eps}    
    Y_{it} (d^t) 
    &= 
    (\gamma^*)^{t-s+1} Y_{is-1} (d^{s-1}) + 
    \sum_{r=s}^t
    (\gamma^*)^{t-r} \times
    [
    \beta^* d_{r} + \theta_r^* + \nu_{ir}^* (d^r)
    ].
\end{align}
\end{lemma}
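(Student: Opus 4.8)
The plan is to derive both identities directly from the structural recursion in Assumption \ref{assn:po-linear-model}, namely $Y_{it}(d^t) = \beta^* d_t + \gamma^* Y_{it-1}(d^{t-1}) + \theta_t^* + \alpha_i^* + \varepsilon_{it}^*(d_t)$, together with the definition $\nu_{it}^*(d_t) \equiv \alpha_i^* + \varepsilon_{it}^*(d_t)$. No probabilistic or distributional input is required: the claim is a deterministic consequence of the linear AR(1) law of motion, so neither Assumption \ref{assn:counterfactual-seq-exog} nor any exchangeability condition enters.

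For \eqref{eq:eps-Y}, I would substitute $\nu_{it}^*(d_t)$ for $\alpha_i^* + \varepsilon_{it}^*(d_t)$ in the recursion and rearrange so as to isolate $\nu_{it}^*(d_t)$ on the left. This is purely algebraic and holds for every $t \geq 1$ and every $d^t \in \{0,1\}^t$.

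For \eqref{eq:Y-eps}, I would fix $s$ and argue by induction on $t \geq s$, equivalently by recursive back-substitution from period $t$ down to period $s-1$ (so that when $s=1$ the leading term reduces to the initial condition $Y_{i0}$). Writing $c_{ir} \equiv \beta^* d_r + \theta_r^* + \nu_{ir}^*(d_r)$, the rearrangement of \eqref{eq:eps-Y} gives the one-step form $Y_{ir}(d^r) = \gamma^* Y_{ir-1}(d^{r-1}) + c_{ir}$. The base case $t=s$ is exactly this one-step form. For the inductive step I would substitute the inductive hypothesis for $Y_{it}(d^t)$ into $Y_{it+1}(d^{t+1}) = \gamma^* Y_{it}(d^t) + c_{it+1}$, distribute $\gamma^*$ across both the $(\gamma^*)^{t-s+1}$ term and the summation, and absorb the new term $c_{it+1}$ as the $r = t+1$ summand; the exponent updates from $t-s+1$ to $(t+1)-s+1$ and the summation limit extends from $t$ to $t+1$, recovering the claimed form.

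The only point requiring any care is bookkeeping the powers of $\gamma^*$ and the summation limits as each lag is peeled off, but this is entirely mechanical, so I do not anticipate a genuine obstacle.
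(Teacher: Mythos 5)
Your proposal is correct and matches the paper's own proof essentially verbatim: equation \eqref{eq:eps-Y} by direct rearrangement of the recursion with $\nu_{it}^*(d_t) = \alpha_i^* + \varepsilon_{it}^*(d_t)$, and equation \eqref{eq:Y-eps} by induction on $t$ for fixed $s$ with base case $t=s$ and an inductive step that distributes $\gamma^*$ and absorbs the new summand. No substantive differences to note.
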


\begin{proof}[Proof of Lemma \ref{thm:Y-eps}]
The first equality \eqref{eq:eps-Y} is immediate from rearranging the linear potential outcome model \eqref{eq:arpo}. 
The second equality \eqref{eq:Y-eps} follows by induction on $t$ given $s$. 
For any $s \geq 1$, the base case $t = s$ repeats \eqref{eq:arpo} and is thus immediate.
Now suppose \eqref{eq:Y-eps} holds for some index $t-1 \geq s$:
\[
    Y_{it-1} (d^{t-1}) 
    =
    (\gamma^*)^{t-s} Y_{is-1} (d^{s-1}) + 
    \sum_{r=s}^{t-1}
    (\gamma^*)^{t-1-r} \times
    [
    \beta^* d_{r} + \theta_r^* + \nu_{ir}^* (d^r)
    ]
\]
It suffices to show the formula also holds for $t$.
For $t$, the linear potential outcome model states:
\[
Y_{it} (d^t) = 
\gamma^* Y_{it-1} (d^{t-1}) + \beta^* d_{t} + \theta_t^* + \nu_{it}^* (d^t)
\]
Plugging in from the preceding equation: 
\begin{align*}
Y_{it} (d^t) = &
\gamma^* \left[ 
    (\gamma^*)^{t-s} Y_{is-1} (d^{s-1}) + 
    \sum_{r=s}^{t-1}
    (\gamma^*)^{t-1-r} \times
    [
    \beta^* d_{r} + \theta_r^* + \nu_{ir}^* (d^r)
    ]
\right]
\\
& + \beta^* d_{t} +\theta_t^* + \nu_{it}^* (d^t)
\end{align*}
Collecting terms yields the desired result. 
\end{proof}

\bigskip

\begin{proof}[Proof of Proposition \ref{thm:seq-exog-implies-seq-exchange-fe}]
For $t \geq s \geq 1$, we have:
\begin{align*}
    &
    E [ Y_{it} (d^t) | \alpha_i^*, Y_i^{s-1}, D_i^{s-1} = d^{s-1}, D_{is} = d_s']
    \\
    =& 
    E [ Y_{it} (d^t) | \alpha_i^*, Y_i^{s-1} (d^{s-1}), D_i^{s-1} = d^{s-1}, D_{is} = d_s']
    \\[0.1in]
    =&
    (\gamma^*)^{t-s+1} Y_{is-1} (d^{s-1}) + \\
    & \sum_{r=s}^t
    (\gamma^*)^{t-r} \times
    [
    \beta^* d_{r} + \theta_r^* + \alpha_i^* + E [\varepsilon_{ir}^* (d^r)
    | \alpha_i^*, Y_i^{s-1} (d^{s-1}), D_i^{s-1} = d^{s-1}, D_{is} = d_s']
    ]
\end{align*}
where the first equality follows by substituting realized treatment, and the second by \eqref{eq:Y-eps} of Lemma \ref{thm:Y-eps}.
It therefore suffices to show that: 
\[
    E [\varepsilon_{it}^* (d^t)
    | \alpha_i^*, Y_i^{s-1} (d^{s-1}), D_i^{s-1} = d^{s-1}, D_{is} = d_s']
\]
is invariant in $d_s'$, or a fortiori that the expectation equals zero, for $t \geq s$. 
Lemma \ref{thm:Y-eps} implies that the preceding expression is equal to:
\[
    E [\varepsilon_{it}^* (d^t)
    | \alpha_i^*, Y_{i0}, D_i^{s-1} = d^{s-1}, D_{is} = d_s', \nu_i^{*s-1} (d^{s-1})]
\]
because $\nu_i^{*s-1} (d^{s-1})$ can be recovered from $Y_i^{s-1} (d^{s-1})$ by \eqref{eq:eps-Y}, and conversely $Y_i^{s-1} (d^{s-1})$ can be recovered from $(Y_{i0}, \nu_i^{*s-1} (d^{s-1}))$ by \eqref{eq:Y-eps}.
Since $\varepsilon_i^{*s-1} (d^{s-1}) = \alpha_i^* - \nu_i^{*s-1} (d^{s-1})$, this is equivalent to:
\[
    E [\varepsilon_{it}^* (d^t)
    | \alpha_i^*, Y_{i0}, D_i^{s-1} = d^{s-1}, D_{is} = d_s', \varepsilon_i^{*s-1} (d^{s-1})]
\]
Finally, iterating expectations in the preceding expression and invoking  \eqref{eq:arpo-exog} yields the desired result:
\begin{align*}
    E [ E[\varepsilon_{it}^* (d^t)
    | \alpha_i^*, Y_{i0}, D_i^{t}, \varepsilon^{*t-1} (\cdot)] | \alpha_i^*, Y_{i0}, D_i^{s-1} = d^{s-1}, D_{is} = d_s', \varepsilon^{*s-1} (d^{s-1})]
    = 0.
\end{align*}
\end{proof}

\bigskip

\begin{proof}[Proof of Proposition \ref{thm:suff-cond-for-AR1991}]
Per Observation \ref{thm:id-linear-po}, when the DGP of potential outcomes satisfies Assumption \ref{assn:arpo}, the implied sequence of observables $(D_i^t,Y_i^t)$ satisfies the reduced-form model of observed outcomes in (\ref{eq:dyn-panel}) with $ (\gamma, \beta, \theta_t, \alpha_i) =  (\gamma^*, \beta^*, \theta_t^*, \alpha_i^*) $ and reduced-form errors $\varepsilon_{it} = \varepsilon_{it}^* (D_{i}^t)$. First-differencing of (\ref{eq:dyn-panel}) yields
\[\Delta Y_{it} = \beta^*\Delta D_{it} +\gamma^*\Delta Y_{it-1} + \Delta \theta_t^* + \Delta \varepsilon_{it},\]
where \( E(\varepsilon_{it}\vert D_i^t,Y_i^{t-1}) = 0 \) because of the condition (\ref{eq:arpo-exog}) in Assumption \ref{assn:arpo} and the fact that $Y_i^{t-1}$ is a function of $\{D_i^{t-1},Y_{i0},\varepsilon_i^{*t-1}(\cdot),\alpha_i\}$.
Consequently, \(E(\Delta \varepsilon_{it} \vert D_i^{t-1},Y_i^{t-2}) = 0\) for all $t\geq 2$.
Using $E(\Delta D_{it} \vert D_i^{t-1},Y_i^{t-2})$ and $E(\Delta Y_{it-1} \vert D_i^{t-1},Y_i^{t-2})$ as instruments for $\Delta D_{it}$ and $\Delta Y_{it-1} $ in first-differenced instrumental-variable regression proves the result. 
\end{proof}

\bigskip

\begin{proof}[Proof of Proposition  \ref{thm:IV-arupo}]
Under Assumption \ref{assn:arupo}, we use the first-differenced observed equation \eqref{eq:fd-obs} to write the conditional mean of $\Delta Y_{it}$ given $ W_{it-1},\mathcal H_{t-1}(y_0)$ as:
\begin{equation}
\label{eq:arupo-cond-mean} 
E[\Delta Y_{it}\vert W_{it-1},\mathcal H_{t-1}(y_0)] = E[\tau^*_{it}D_{it}\vert W_{it-1},\mathcal H_{t-1}(y_0)] + \gamma^*E[\Delta Y_{it-1}\vert W_{it-1},\mathcal H_{t-1}(y_0)] + \Delta\theta_t^*,
\end{equation}
because \(E(\Delta\varepsilon^*_{it}\vert H_{it-1})=0\) and $W_{it-1}$ is a function of $H_{it-1}$. Moreover, 
\begin{eqnarray*}
    E[\tau^*_{it}D_{it}\vert W_{it-1},\mathcal H_{t-1}(y_0)] 
    & = & E[\tau^*_{it}(D_i^{t-1})\vert D_{it}=1,W_{it-1},\mathcal H_{t-1}(y_0)]\times E[D_{it}\vert W_{it-1},\mathcal H_{t-1}(y_0)] \\
    & = & \underset{ATFT_t(y_0)}{\underbrace{E[\tau^*_{it}(D_i^{t-1})\vert D_{it}=1,\mathcal H_{t-1}(y_0)]}}\times E[\Delta D_{it}\vert W_{it-1},\mathcal H_{t-1}(y_0)],
\end{eqnarray*}
where the first equality holds by the law of iterated expectation, and the second holds under Assumption \ref{assn:MI-hte} because $W_{it-1}$ is a function of $\{ \varepsilon_i^{*t-1}(\cdot),\alpha^*_i \}$ and $D_{i}^{t-1}=0$ once conditional on $\mathcal H_{t-1}(y_0)$. 
This implies the mean of $\Delta Y_{it}$ conditional on $W_{it-1}$ and the event $\mathcal H_{t-1}(y_0)$ is linear in \(E[Z_{it}'(y_0)\vert\mathcal H_{t-1}(y_0)]\), with the coefficients being \( (ATFT_t(y_0),\gamma^*,\Delta\theta_t^*)\). 
Therefore, 
\begin{eqnarray*}
    (ATFT_t(y_0),\gamma^*,\Delta\theta_t^*)' & = & 
    E[Z_{it}'Z_{it}\vert\mathcal H_{t-1}(y_0)]^{-1}E[Z_{it}'\Delta Y_{it}\vert \mathcal H_{t-1}(y_0)] \\
    & = & E[Z_{it}'X_{it}\vert\mathcal H_{t-1}(y_0)]^{-1}E[Z_{it}'\Delta Y_{it}\vert \mathcal H_{t-1}(y_0)],     
\end{eqnarray*}
where we suppress the dependence of instruments $Z_{it}(y_0)$ on the initial condition $y_0$ for simplicity, and the second equality holds because \(E[Z_{it}'X_{it}\vert\mathcal H_{t-1}(y_0)]=E[Z_{it}'Z_{it}\vert\mathcal H_{t-1}(y_0)]\) by the law of iterated expectation.
\end{proof}

\bigskip

\begin{proof}[Proof of Proposition \ref{thm:evaluating-obs}]
With potential outcomes and treatment rules in (\ref%
{eq:po-choice}) and (\ref{eq:treatment-choice}), we have:%
\begin{eqnarray*}
&&E[Y_{2}(d_{1},d_{2})|D_{2}=d_{2},Y_{1}=y_{1},D_{1}=d_{1},Y_{0}=y_{0}] \\
&=&E\left[ \lambda _{2}(d_{1},d_{2},y_{0},y_{1},\alpha (d_{2}),\varepsilon
_{2})\left\vert 
\begin{array}{c}
\phi _{2}(y_{0},y_{1},d_{1},\xi_1(d_{1}),\eta _{2})=d_{2}, \\ 
\lambda _{1}(d_{1},y_{0},\alpha (d_{1}),\varepsilon _{1})=y_{1}, \\ 
\phi _{1}(y_{0},\xi_0,\eta _{1})=d_{1},\text{ \ }Y_{0}=y_{0}%
\end{array}%
\right. \right]  \\
&=&E[\lambda _{2}(d_{1},d_{2},y_{0},y_{1},\alpha (d_{2}),\varepsilon
_{2})|\lambda _{1}(d_{1},y_{0},\alpha (d_{1}),\varepsilon
_{1})=y_{1},Y_{0}=y_{0}]\text{,}
\end{eqnarray*}%
where the last equality is due to (\ref{assn:CI}) and the right-hand side does not vary with $d_{2}$. 

Also note: 
\begin{eqnarray*}
&&E[Y_{2}(d_{1},d_{2})|D_{1}=d_{1},Y_{0}] \\
&=&E[\lambda _{2}(0,0,y_{0},Y_{1}(d_{1}),\alpha (d_{2}),\varepsilon
_{2})|\phi _{1}(y_{0},\xi_0,\eta _{1})=d_{1},Y_{0}=y_{0}] \\
&=&E[\lambda _{2}(0,0,y_{0},Y_{1}(d_{1}),\alpha (d_{2}),\varepsilon
_{2}|Y_{0}=y_{0}]\text{,}
\end{eqnarray*}%
where the last equality holds because of (\ref{assn:CI}) and the fact that $Y_{1}(d_{1})$ is a function of $(Y_{0},\alpha (d_{1}),\varepsilon_{1})$. Again, the right-hand side does not vary with $d_{1}$.

Furthermore, 
\begin{eqnarray*}
&&E[Y_{1}(d_{1})|D_{1}=d_{1},Y_{0}] \\
&=&E[\lambda _{1}(d_{1},y_{0},\alpha (d_{1}),\varepsilon _{1})|\phi
_{1}(y_{0},\xi_0,\eta _{1})=d_{1},Y_{0}=y_{0}] \\
&=&E[\lambda _{1}(d_{1},y_{0},\alpha (d_{1}),\varepsilon
_{1})|Y_{0}=y_{0}]\text{,}
\end{eqnarray*}%
where the last equality is due to (\ref{assn:CI})\ and the right-hand side does not vary with $d_{1}$. Thus, mean sequential exogeneity in Assumption \ref{assn:mse}b holds.

On the other hand, under (\ref{assn:CI}),
\begin{eqnarray*}
&&E[Y_{2}(0,0)-Y_{1}(0)|(D_{1},D_{2})=(d_{1},d_{2}),Y_{0}=y_{0}] \\
&=&E\left[ \lambda _{2}(0,0,y_{0},Y_{1}(0),\alpha (0),\varepsilon
_{2})-Y_{1}(0)\left\vert 
\begin{array}{c}
\phi _{2}(y_{0},Y_{1}(d_{1}),d_{1},\xi_1(d_{1}),\eta _{2})=d_{2}, \\ 
\phi _{1}(Y_{0},\xi_0,\eta _{1})=d_{1},Y_{0}=y_{0}%
\end{array}%
\right. \right] 
\end{eqnarray*}
where $Y_{2}(0,0)-Y_{1}(0)$ is a deterministic function of $(Y_{0},\alpha (0),\varepsilon_{1},\varepsilon_{2})$. Under (\ref{assn:CI}), the right-hand side \textit{does} generally vary with $(d_{1},d_{2})$ even conditional on $Y_{0}$. 
This is because the event \textquotedblleft $(D_{1},D_{2})=(d_{1},d_{2})$\textquotedblright\ \textit{does} restrict the joint support of $(Y_{0},\alpha (d_{1}),\varepsilon_{1},\xi_1(d_{1}),\eta ^{2})$.\footnote{%
    Note $Y_{1}(d_{1})=\lambda _{1}(d_{1},Y_{0},\alpha (d_{1}),\varepsilon_{1}(d_{1}))$ affects the definition of conditioning events due to its entry through $\phi _{2}(\cdot )$. } 
Hence, even after conditioning on $Y_{0}=y_{0}$, the right-hand side is in general non-degenerate in $(d_{1},d_{2})$. 
\end{proof}

\bigskip

\section{Generalizations}
\subsection{Identification of Treatment Effects for ARUPO}
\label{apx:gen-arupo}

In this appendix we extend the definition and identification of treatment effects to general sequences of past treatments $d^{t-1}$. 
That is, for any $d^{t-1}\in\{0,1\}^{t-1}$, generalize \eqref{eq:atft} to:
\[
ATT_t(d^{t-1},y_0)\equiv E[\tau_{it}^* (d^{t-1})\vert D_{it}=1, \mathcal{H}_{t-1} (d^{t-1}, y_0) ]
\]
where $ \mathcal{H}_{t-1} (d^{t-1}, y_0) \equiv \{ D_i^{t-1}=d^{t-1}, Y_{i0}=y_0 \}$.
We maintain the ARUPO model (Assumption \ref{assn:arupo}), but strengthen Assumption \ref{assn:MI-hte} on treatment effects as follows.

\begin{assumption} \label{assn:MI-hte2}
    For any initial condition $y_0$ and $t\geq 1$,
    \[
        E[\tau_{it}^*(d^{t-1})\vert D_{it}, \mathcal H_{t-1}(d^{t-1},y_0), \tau_i^{*t-1}(d^{t-2}),\alpha_i^*,\varepsilon_i^{*t-1}] 
        = 
        E[\tau_{it}(d^{t-1})\vert D_{it}, \mathcal H_{t-1}(d^{t-1},y_0)],
        \]
    where $d^{t-1}\equiv (d^{t-2},d_{t-1})$.
\end{assumption}

\noindent
Assumption \ref{assn:MI-hte2} resembles Assumption \ref{assn:MI-hte} except that it also posits the (conditional) serial uncorrelation of past treatment effects. 
Under Assumption \ref{assn:MI-hte2}, 
\begin{equation}\label{eq:MI-tau2}
    E[\tau_{it}(d^{t-1})\vert D_{it}, D_i^{t-1} = d^{t-1},Y_i^{t-1}] = 
        E[\tau_{it}(d^{t-1})\vert D_{it}, D_i^{t-1} = d^{t-1}, Y_{i0}],
\end{equation}
because $Y_i^{t-1}$ is a function of \( \{Y_{i0},\tau_i^{*t-1}(d^{t-2}),\varepsilon_i^{*t-1},\alpha_i^*\} \) conditional on $D_i^{t-1}=d^{t-1}$.
Extending our IV estimand to the generalized history, we (re)define:
\[
Z_{it} (d^{t-1},y_0)
\equiv E[X_{it}\vert W_{it-1}, \mathcal H_{t-1} (d^{t-1},y_0)]
\]
and
\begin{equation} \label{eq:IV-estimand-gen}
    \begin{pmatrix}
        \widetilde \beta (d^{t-1},y_0) \\ \widetilde \gamma (d^{t-1},y_0) \\ \widetilde{\Delta \theta_t} (d^{t-1},y_0)
    \end{pmatrix}
    \equiv 
    E[Z_{it}' (d^{t-1},y_0) X_{it}\vert \mathcal H_{t-1} (d^{t-1},y_0)]^{-1}E[Z_{it}' (d^{t-1},y_0) \Delta Y_{it}\vert \mathcal H_{t-1} (d^{t-1},y_0)]. 
\end{equation}
The next result is analogous to Proposition \ref{thm:IV-arupo} under Assumption \ref{assn:arupo} and \ref{assn:MI-hte}, only with the conditioning event and the causal parameter defined as $\mathcal H_{t-1}(d^{t-1},y_0)$ and $ATT_t(d^{t-1},y_0)$. 
\bigskip

\begin{proposition} \label{thm:arupo-g-iv}
Suppose Assumptions \ref{assn:arupo} and \ref{assn:MI-hte2} hold, 
and:
\[
    E[Z_{it}'(d^{t-1}, y_0)Z_{it}(d^{t-1}, y_0)\vert \mathcal H_{t-1}(d^{t-1}, y_0)]
\] 
has full rank. Then the IV slope estimands (\ref{eq:IV-estimand-gen}) for $t \geq 3$ recover the causal terms:
\begin{align*}
\widetilde \beta (d^{t-1}, y_0) &= ATT_t(d^{t-1},y_0) ; \\
\widetilde \gamma (d^{t-1}, y_0) &= \gamma^*,
\end{align*}
and the intercept estimand recovers: 
\[
\widetilde{\Delta \theta_t} (d^{t-1}, y_0) =
\begin{cases}
    \Delta \theta_t^* & \text{when $d_{t-1} = 0$}; \\
    \Delta\theta_t^* + ATT_t (d^{t-1}, y_0) - ATT_{t-1} (d^{t-2}, y_0) & \text{when $d_{t-1} = 1$.}
\end{cases}
\]
\end{proposition}
\bigskip 

\begin{proof}[Proof of Proposition \ref{thm:arupo-g-iv}]
Since this proof broadly resembles the proof of Proposition \ref{thm:arupo-g-iv}, we only note the points of departure and extension.

First, consider the case with $d_{t-1} = 0$ so that $\Delta D_{it}= D_{it}$ conditional on $\mathcal H_{t-1}(d^{t-1},y_0)$. The proof of these analogous identification results in Proposition \ref{thm:IV-arupo} remains mostly the same, with the following changes. First, it conditions on $\mathcal{H}_{t-1} (d^{t-1}, y_0)$ instead of the previous $\mathcal{H}_{t-1} (y_0)$. 
Second, $W_{it-1}$ is now a function of $\{\varepsilon_i^{*t-1} (\cdot), \alpha_i^*, \tau_{t-2}^* (d^{t-2})\}$
once conditional on $\mathcal{H}_{t-1} (d^{t-1}, y_0)$, which requires Assumption \ref{assn:MI-hte2} to conclude that: 
\[
    E [ \tau_{it}^* (D_i^{t-1}) | D_{it} = 1, W_{it-1}, \mathcal{H}_{t-1} (d^{t-1}, y_0)]
    = 
    ATT_t (d^{t-1}, y_0).
\]

In the other case, $d_{t-1} = 1$ so that $\Delta D_{it}= D_{it} - 1$ conditional on $\mathcal H_{t-1}(d^{t-1},y_0)$.
Relative to Proposition \ref{thm:IV-arupo}, the conditional mean of the differenced outcome (analogous to \eqref{eq:arupo-cond-mean}) now includes an additional term: 
\[
- E [ \tau_{it-1}^* (D_i^{t-2}) | W_{it-1}, \mathcal{H}_{t-1} (d^{t-1}, y_0)] = - ATT_{t-1} (d^{t-2}, y_0),
\]
where the equality holds under Assumption \ref{assn:MI-hte2} 
because $W_{it-1}$ is a function of $\varepsilon_i^{*t-1} (\cdot)$, $\alpha_i^*$ and $\tau_{t-2}^* (d^{t-2})$ once conditional on $\mathcal{H}_{t-1} (d^{t-1}, y_0)$. Plugging in and rearranging yields the linear conditional expectation expression:
\begin{align*}
E[\Delta Y_{it}\vert W_{it-1},\mathcal H_{t-1}(d^{t-1}, y_0)] 
=& ATT_t(d^{t-1}, y_0) E [ \Delta D_{it} | W_{it-1}, \mathcal{H}_{t-1} (d^{t-1}, y_0)] + \\
& \gamma^*E[\Delta Y_{it-1}\vert W_{it-1},\mathcal H_{t-1}(y_0)] + \Delta\theta_t^* + \\
& ATT_t (d^{t-1}, y_0) - ATT_{t-1} (d^{t-2}, y_0).
\end{align*} 
The argument for interpreting the OLS estimand as an IV estimand follows as before. 
Note that in this case, the estimand for the constant term includes an additional intertemporal difference in treatment effects: 
\[
\widetilde{\Delta \theta_t} = 
\Delta\theta_t^* + ATT_t (d^{t-1}, y_0) - ATT_{t-1} (d^{t-2}, y_0).
\]
\end{proof}

\newpage
\renewcommand*{\bibfont}{\small}
\bibliographystyle{econometrica}
\bibliography{bibliography}

\end{document}